\tikzset{
	>=stealth',
	help lines/.style={dashed, thick},
	axis/.style={<->},
	important line/.style={thick},
	connection/.style={dotted},
}
\newtheorem{theorem}{Theorem}
\newtheorem{lemma}{Lemma}
\newtheorem{corollaty}{Corollary}
\def\x{\mathbf{x}}
\def\0{\mathbf{0}}
\def\cC{L'}
\newcommand{\taba}{\hspace*{0.25in}}
\newcommand{\tabb}{\hspace*{.5in}}
\title{Network Flow Methods for the Minimum Covariates Imbalance Problem}
\author[1]{Dorit S. Hochbaum\thanks{Research supported in part by NSF award No. CMMI-1760102.}
\thanks{hochbaum@ieor.berkeley.edu}
}
\author[1]{Xu Rao\thanks{xrao@berkeley.edu}}
\affil[1]{\small Department of Industrial Engineering and Operations Research, \ University of California, Berkeley}
\date{}
\begin{document}
	\maketitle

%
%
%
%
%
%
%
%
	
\begin{abstract}
	The problem of balancing covariates arises in observational studies where one is given a group of control samples and another group, disjoint from the control group, of treatment samples. Each sample, in either group, has several observed nominal covariates. The values, or categories, of each covariate partition the treatment and control samples to a number of subsets referred to as \textit{levels} where the samples at every level share the same covariate value. We address here a problem of selecting a subset of the control group so as to balance, to the best extent possible, the sizes of the levels between the treatment group and the selected subset of control group, the {\em min-imbalance} problem.
	It is proved here that the min-imbalance problem, on two covariates, is solved efficiently with network flow techniques.  We present an integer programming formulation of the problem where the constraint matrix is totally unimodular, implying that the linear programming relaxation to the problem has all basic solutions, and in particular the optimal solution, integral.   This integer programming formulation is linked to a minimum cost network flow problem which is solvable in $O(n\cdot (n' + n\log n))$ steps, for $n$ the size of the treatment group and $n'$ the size of the control group.  A more efficient algorithm is further devised based on an alternative, maximum flow, formulation of the two-covariate min-imbalance problem, that runs in $O(n'^{3/2}\log^2n)$ steps.
\end{abstract}

%
\section{Introduction}

The problem of balancing covariates arises in observational studies where one is given a group of control samples and another group, disjoint from the control group, of treatment samples. Each sample, in either group, has several observed nominal covariates. The values, or categories, of each covariate partition the treatment and control samples to a number of subsets referred to as \textit{levels} where the samples at every level share the same covariate value.

When estimating causal effects using observational data, it is desirable to replicate a randomized experiment as closely as possible by obtaining treatment and control groups with similar covariate distributions. This goal can often be achieved by choosing well-matched samples of the original treatment and control groups, thereby reducing bias due to the covariates. The matching is typically one-to-one.

The bias due to the covariates can be completely removed if for each matched pair, the two samples belong to the same levels over all covariates.  Satisfying this requirement, however, typically results in a very small selection from the treatment and control group, which is not desirable. To address this Rosenbaum et al. \cite{Rosenbaum07} introduced a relaxation requiring instead to match all treatment samples to a subset of the control samples so that the number of control and treatment samples in each level of each covariate are the same.  This requirement is known in the literature as {\em fine balance}.

It is not always feasible to find a selection of the control samples that satisfies the fine balance requirement.  To that end several papers considered the goal of minimizing the violation of this requirement, which we refer to as {\em imbalance}. Yang et al. \cite{Yang12} considered finding an optimal matching between the treatment samples and a selection of the control samples where that selection minimizes an imbalance objective, to be explained below.  Bennett et al. \cite{BVZ20} considered directly the minimum imbalance problem that finds a selection which minimizes the violation of the fine balance requirement.

Our focus here is the minimum imbalance problem, also referred to as the {\em min-imbalance} problem.
The min-imbalance problem is trivial for a single covariate and NP-hard for three or more covariates, as discussed next. For the case of two covariates, we introduce here efficient network flow algorithms that solve the problem in polynomial time.

Let $P$ be the number of covariates to be balanced. To introduce the min-imbalance problem, consider first the case of a single covariate, $P=1$, that partitions the control and treatment groups into, say, $k$ levels each. Let the sizes of the levels in the treatment group be $\ell_1, ..., \ell_k$ and the sizes of the levels in the control group be $\ell'_1, ..., \ell'_k$. The min-imbalance problem is to select a subset of the control group, called {\em selection}, such that the sizes of the levels in the selection, approximate as closely as possible the sizes $\ell_1, ..., \ell_k$. The solution to the single covariate min-imbalance problem is trivial:  the optimal selection is any $\ell_i$ samples out of the $\ell'_i$ level $i$ control samples if $\ell'_i\geq \ell_i$; otherwise, select all $\ell'_i$ level $i$ control samples. The value of the objective function corresponding to this selection is then $\sum_{i=1}^k \max\{0,\ell_i-\ell'_i\} $, which is the optimal value for the single covariate min-imbalance problem.

For the case of multiple covariates, covariate $p$ partitions both treatment and control groups into $k_p$ levels each. Denote the sizes of the levels in the treatment group under covariate $p$ by $\ell_{p,1}, \ell_{p,2}, ..., \ell_{p,k_p}$, and let the partition of the control group under covariate $p$ be  $L'_{p,1}, L'_{p,2}, ..., L'_{p,k_p}$ of sizes $\ell'_{p,1}, \ell'_{p,2}, ..., \ell'_{p,k_p}$. The min-imbalance problem is to find a selection $S$, subset of the control group, that minimizes the objective function: $\sum_{p=1}^P \sum_{i=1}^{k_p} | |S\cap L'_{p,i}|-\ell_{p,i} |$.

Let the number of treatment samples be $n$ and the number of control samples be $n'$.  In applications of the min-imbalance problem it is often the case that the size of selection $S$ is required to be $n$, the size of the treatment group. This is assuming that $n\leq n'$.   Here we focus the presentation on the case where the size of $S$ is required to be $n$.  Yet, all our methods generalize to the min-imbalance problem for any specified size $q$, $q \leq n'$, of the selection, as shown in Appendix \ref{sec:sizeq}.

The min-imbalance problem was recently studied by Bennett et al. \cite{BVZ20} where a mixed integer programming (MIP) formulation of the problem was presented, with the size of the selection equal to $n$, the size of the treatment group. They showed that the corresponding linear programming (LP) relaxation yields an integer solution when the number of covariates $P\leq 2$.  Consequently, Bennett et al. \cite{BVZ20} established that the min-imbalance problem on two covariates is solved in polynomial time via linear programming. For $P=3$ they presented an example where the LP relaxation's solution is not integral.  It is further noted in \cite{BVZ20}, that the min-imbalance problem for $P\geq 3$ is NP-hard.  However, no proof or reference is provided.  Among other contributions, we fill this gap  with a formal proof that the min-imbalance problem for three or more covariates is NP-hard.

A problem related to the min-imbalance that selects a subset of the control samples so as to achieve a certain type of balancing of the levels is called {\em Optimal Matching with Minimal Deviation from Fine Balance}, see Yang et al. \cite{Yang12}.  We refer to this problem here for short as {\em $\kappa$-Matching-Balance} (MB). The MB problem is to optimize the assignment of each treatment sample to a number, $\kappa$, of control samples, based on a distance measure between each treatment and each control sample.  This is subject to a constraint that the selection of the $\kappa n$ control samples is an optimal solution to a problem related to min-imbalance.  In other words, MB is defined in two stages:
In the first stage the goal is to find a selection $S$ of control samples of size $\kappa n$, where $\kappa$ is an integer between $1$ and $\frac{n'}{n}$, and $n$ is the size of the treatment group, so as to minimize the $\kappa$-imbalance, defined as $\sum_{p=1}^P \sum_{i=1}^{k_p} | {|S\cap L'_{p,i}|}-\kappa \ell_{p,i} |$.  In the second stage, among all selections of control samples that attain the minimum $\kappa$-imbalance, one chooses the selection that minimizes the distance matching of each treatment sample to exactly $\kappa$ selected control samples.
Yang et al. \cite{Yang12} studied this MB problem for a single covariate and proposed two network flow algorithms.  We observe here that, for {\em any} number of covariates, if the first stage problem of minimizing the $\kappa$-imbalance has a unique solution, in terms of the number of selected samples from each level-intersection (to be discussed in the next section, Section \ref{sec:prelim}) of the control group, then an optimal solution to the second stage, and therefore to the MB problem, is easily attained by solving a network flow problem.   As shown in Appendix \ref{sec:kappa} the minimum $\kappa$-imbalance is equivalent to the min-imbalance problem, and therefore an optimal solution to a corresponding min-imbalance problem provides an optimal solution to the minimum $\kappa$-imbalance problem.  It follows in particular that for a single covariate, the first stage problem of minimizing the $\kappa$-imbalance is trivial and has a unique solution in terms of the number of selected samples from each level of the control group.  For two or more covariates, no polynomial running time algorithm is known for this problem.  However, for two covariates, any of the algorithms shown here, also solve optimally the minimum $\kappa$-imbalance problem.
For any number of covariates, when the minimum $\kappa$-imbalance solution is unique (in terms of level-intersection sizes) and known, MB problem is solved in polynomial time in the second stage (for details see Appendix \ref{sec:kappa}). In particular for two covariates, if the solution to the minimum $\kappa$-imbalance problem is unique, then the problem is solved efficiently with network flow techniques.

Our main results here are efficient algorithms for the min-imbalance problem with two covariates.  We present an integer programming formulation of the problem, related to that of Bennett et al \cite{BVZ20}, and show that the constraint matrix is totally unimodular. That implies the linear programming relaxation to the problem has all basic solutions, and in particular the optimal solution, integral.   We then show that the two-covariate min-imbalance problem can be solved much more efficiently than as a linear program with network flow techniques.  We show how to solve the problem as a minimum cost network flow problem with complexity of $O(n\cdot (n' + n\log n))$.  We then devise a more efficient algorithm based on a maximum flow formulation of the two-covariate min-imbalance problem that runs in $O(n'^{3/2}\log^2n)$ steps.
Another contribution here is a proof that for three or more covariates, the min-imbalance problem is NP-hard.

In summary our contributions here are:
\begin{itemize}	
	\item An integer programming formulation of the min-imbalance problem with two covariates that has a totally unimodular constraint matrix.
	\item A formulation of the two-covariate min-imbalance problem as a minimum cost network flow problem, solved in $O(n\cdot (n' + n\log n))$ steps.
	\item A maximum flow formulation of the two-covariate min-imbalance problem that is solved in $O(n'^{3/2}\log^2n)$ steps.
	\item An NP-hardness proof of the min-imbalance problem with three or more covariates.
\end{itemize}

{\bf Paper overview:}  Section  \ref{sec:prelim} provides the notation used here.
Section  \ref{sec:ip} describes the integer programming formulation with a totally unimodular constraint matrix for the two-covariate min-imbalance problem.   We describe the minimum cost network flow formulation of the two-covariate min-imbalance problem, and the algorithm used to solved it in Section \ref{sec:MCNF}.  In Section \ref{sec:max-flow} we provide a maximum flow problem, the output of which is converted, in linear time, to an optimal solution to the two-covariate min-imbalance problem.  We provide several concluding remarks in Section \ref{sec:conclusions}. The NP-hardness proof of the min-imbalance problem with three of more covariates is provided in Appendix \ref{sec:NPC}. Appendix \ref{sec:sizeq} explains our method also applies to the min-imbalance problem for other specified size of the selection.

\section{Preliminaries and notations} \label{sec:prelim}
The goal of the min-imbalance problem is to identify a selection of control samples that will match, at each level of each covariate partition, as closely as possible, the number of treatment samples in that level and the number of selection samples in the same level.  We denote by $P$ the number of covariates and the number of levels in the partition induced by covariate $p$ by $k_p$, for $p=1, ..., P$.
Let the levels of the treatment group under covariate $p=1,\ldots ,P$ be $L_{p,1}, L_{p,2}, ..., L_{p,k_p}$ of sizes $\ell_{p,1}, \ell_{p,2}, ..., \ell_{p,k_p}$, and let the levels of the control group under covariate $p$ be  $L'_{p,1}, L'_{p,2}, ..., L'_{p,k_p}$ of sizes $\ell'_{p,1}, \ell'_{p,2}, ..., \ell'_{p,k_p}$.  For a selection $S$ of control samples we define the discrepancy at level $i$ under covariate $p$ as,
$$dis(S,p,i)= |S\cap L'_{p,i}|-\ell_{p,i}.$$
The discrepancy of a level can be positive or negative. If the discrepancy is positive we refer to it as excess which is defined as $e_{p,i}(S)= \max \{ 0, dis(S,p,i) \}$, and if negative, we refer to it as deficit $d_{p,i}(S)= \max \{ 0, -dis(S,p,i) \}$.  With this notation the imbalance of a selection $S$, $IM(S)$ is,
 $$IM(S)=\sum_{p=1}^P\sum_{i=1}^{k_p} \left(e_{p,i}(S)+d_{p,i}(S)\right) .$$
 Notice that this quantity is identical to the imbalance form presented in the introduction: $IM(S)=\sum_{p=1}^P \sum_{i=1}^{k_p} | |S\cap L'_{p,i}|-\ell_{p,i} |$.

We now present the integer programming formulation that was given by Bennett et al.\ in \cite{BVZ20} for the min-imbalance problem.  That integer program involves two sets of decision variables: 
\\ $z_j$: a binary variable equal to $1$ if control sample $j$ is in the selection $S$, and $0$ otherwise, for $j=1,\ldots,n'$;
\\ $y_{p,i}=|dis(S,p,i)|=| |S\cap L'_{p,i}|-\ell_{p,i}|$: the absolute value of discrepancy at level $i$ under covariate $p$, for $p=1,\ldots,P$, and $i=1,\ldots,k_p$.  We note that this variable can only assume integer values, even though Bennett et al.\ refer to this as a {\em mixed} integer programming formulation.\\
With these variables the formulation is:

\begin{subequations}\label{MI0}
	\begin{align}
	& \min& \sum_{p=1}^P\sum_{i=1}^{k_p} y_{p,i} &&\\
	&\text{s.t.} &  \sum_{j\in \cC_{p,i}} z_j - \ell_{p,i} \leq y_{p,i}  &&   p=1,\ldots,P, \quad  i=1,\ldots,k_p
	\label{cons:ab1} \\
	&&   \ell_{p,i} - \sum_{j\in \cC_{p,i}} z_j \leq y_{p,i}  &&  p=1,\ldots,P, \quad  i=1,\ldots,k_p
	 \label{cons:ab2} \\
	&& \sum_{j=1}^{n'} z_{j} = n && \label{cons:sum}  \\
	&& z_j\in\{0, 1\} && j=1,\ldots,n'.
	\end{align}
\end{subequations}

For each pair $p,i$, $p=1,\ldots,P$ and $i=1,\ldots,k_p$,
constraint (\ref{cons:ab1}) and (\ref{cons:ab2}) ensure that $y_{p,i}$ assumes the absolute value of the difference between the number of selected level $i$ control samples and $\ell_{p,i}$ at an optimal solution.  These constraints also ensure that any feasible $y_{p,i}$ is non-negative and therefore a non-negativity constraint is not required for variable $y_{p,i}$.  The constraint (\ref{cons:sum}) specifies that the size of selected subset equals the size of the treatment group.

Bennett et al.\ \cite{BVZ20} proved that any basic solution of the linear programming relaxation of (\ref{MI0}) is integral for $P=2$.   We provide in Section \ref{sec:ip} a stronger result showing that a slightly modified form of this integer programming formulation has a constraint matrix which is totally unimodular for $P=2$.  That implies that result that every basic solution is integer, and furthermore the constraint matrix is that of a minimum cost network flow problem.  This implies that the problem can be solved significantly more efficiently than as linear programming problem.

An optimal solution to formulation (\ref{MI0}) specifies for each control sample whether or not it is in the selection. We observe however that the output to the min-imbalance problem, for any number of covariates, can be presented more compactly in terms of {\em level-intersections}.  For $P$ covariates the intersection of the level sets $L'_{1,i_1} \cap  L'_{2,i_2}\cap \ldots \cap L'_{P,i_P}$, $i_p=1, \ldots ,k_p$, $p=1, \ldots ,P$, form a partition of the control group.   The number of non-empty sets in this partition is at most $\min \{ n' , \Pi _{p=1}^P k_p \}$.  Instead of specifying which sample belongs to the selection, it is sufficient to determine the number of selected control samples in each level intersection, since the identity of the specific selected samples has no effect on the imbalance. With this discussion, we have a theorem on the presentation of the solution to the min-imbalance problem in terms of the level-intersection sizes.

\begin{theorem}\label{thm:unique}
	The level-intersection sizes $s_{i_1,i_2,\ldots ,i_P}$ are an optimal solution to the min-imbalance problem if there exists an optimal selection $S$ such that $s_{i_1,i_2,\ldots ,i_P}=|S \cap L'_{1,i_1} \cap  L'_{2,i_2}\cap \ldots \cap L'_{P,i_P}|$.
\end{theorem}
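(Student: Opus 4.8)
The plan is to observe that the imbalance objective $IM(S)$ depends on the selection $S$ only through the marginal counts $|S\cap L'_{p,i}|$, and that these marginals are in turn determined by the level-intersection sizes. The key fact is a partition identity: for each covariate $p$ the levels $L'_{p,1},\ldots,L'_{p,k_p}$ partition the whole control group, so for fixed $p$ and $i$ the level-intersections $L'_{1,i_1}\cap\cdots\cap L'_{P,i_P}$ taken over all index vectors with $i_p=i$ form a partition of $L'_{p,i}$. Hence
$$|S\cap L'_{p,i}|=\sum_{\substack{i_1,\ldots,i_P\\ i_p=i}}\bigl|S\cap L'_{1,i_1}\cap\cdots\cap L'_{P,i_P}\bigr|=\sum_{\substack{i_1,\ldots,i_P\\ i_p=i}}s_{i_1,\ldots,i_P},$$
and substituting into $IM(S)=\sum_{p=1}^P\sum_{i=1}^{k_p}\bigl|\,|S\cap L'_{p,i}|-\ell_{p,i}\,\bigr|$ shows that $IM(S)$ is a fixed function of the vector $(s_{i_1,\ldots,i_P})$ alone.

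Next I would record the feasibility side and the two-way correspondence. The vector arising from an optimal selection $S$ automatically satisfies $0\le s_{i_1,\ldots,i_P}\le |L'_{1,i_1}\cap\cdots\cap L'_{P,i_P}|$ and $\sum_{i_1,\ldots,i_P}s_{i_1,\ldots,i_P}=|S|=n$, since the level-intersections partition the entire control group. Conversely, any integer vector obeying these two conditions is realized by some selection --- choose arbitrarily $s_{i_1,\ldots,i_P}$ samples out of each level-intersection $L'_{1,i_1}\cap\cdots\cap L'_{P,i_P}$ --- so the set of level-intersection size vectors achievable by a size-$n$ selection is exactly the set of such vectors, and on this set the imbalance is the well-defined function displayed above, independent of which particular samples are chosen inside each intersection.

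Combining these two observations yields the theorem. Since $IM$ depends only on the level-intersection sizes, every selection $S'$ with $s_{i_1,\ldots,i_P}=|S'\cap L'_{1,i_1}\cap\cdots\cap L'_{P,i_P}|$ has $IM(S')=IM(S)$; because $S$ is optimal, this common value equals $\opt$. Thus the vector $(s_{i_1,\ldots,i_P})$ is feasible and attains the minimum imbalance, so it is an optimal solution presented in level-intersection form, and conversely any optimal such vector lifts back to an optimal selection.

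I do not expect a genuine technical obstacle; the statement is in essence a consistency and bookkeeping fact establishing that the level-intersection sizes are a faithful compact encoding of a solution. The single point that warrants care is the partition identity invoked in the first step --- that fixing a coordinate $i_p=i$ leaves the level-intersections partitioning all of $L'_{p,i}$ rather than a proper subset --- which follows at once from the fact that, separately for each covariate $p$, the control-group levels $L'_{p,1},\ldots,L'_{p,k_p}$ are themselves a partition.
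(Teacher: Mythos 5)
Your argument is correct and follows exactly the route the paper takes: the paper's (brief) justification preceding the theorem is precisely that the imbalance depends only on the counts per level-intersection, so the identity of the specific selected samples is irrelevant, and any vector satisfying the capacity and size conditions lifts back to a selection by picking arbitrary samples within each intersection. Your partition identity simply makes explicit the bookkeeping the paper leaves as a one-line remark, so no gap and no genuinely different method.
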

  We will say that the solution to the problem is unique if for any optimal selection $S$, the numbers $s_{i_1,i_2,\ldots ,i_P}=|S \cap L'_{1,i_1} \cap  L'_{2,i_2}\cap \ldots \cap L'_{P,i_P}|$ are unique.
  In order to derive an optimal selection given the optimal level-intersection sizes, one selects any $s_{i_1,i_2,\ldots ,i_P}$ control samples from the intersection $L'_{1,i_1} \cap  L'_{2,i_2}\cap \ldots \cap L'_{P,i_P}$ for $i_p=1, \ldots ,k_p$, $p=1, \ldots ,P$. Obviously, when $s_{i_1,i_2,\ldots ,i_P}<|L'_{1,i_1} \cap  L'_{2,i_2}\cap \ldots \cap L'_{P,i_P}|$ for at least one combination of $i_1,i_2,\ldots ,i_P$, the optimal selection is never unique.

Standard formulations of the minimum cost network flow (MCNF) and the maximum flow problems as well as a discussion of why MCNF constraint matrix is totally unimodular, are given in Appendix \ref{sec:flow}.

\section{A modified formulation with a totally unimodular constraint matrix for $P=2$}\label{sec:ip}


In our formulation, instead of using variables $y_{p,i}$, we use variables for excess and deficit. As discussed in Section \ref{sec:prelim}, $||S\cap L'_{p,i}| - \ell_{p,i}| =e_{p,i}(S)+d_{p,i}(S)$ for each $p$ and $i$.   We let the variable for excess for $p$ and $i$ be $e_{p,i}$ and the variable for deficit be $d_{p,i}$.  Note that $y_{p,i}=e_{p,i}+d_{p,i}$ where both $e_{p,i}$ and $d_{p,i}$ are non-negative variables.


For each $p$ and $i$, $|S\cap L'_{p,i}|  - \ell_{p,i} = e_{p,i}-d_{p,i} \Leftrightarrow |S\cap L'_{p,i}|+d_{p,i} -e_{p,i} =  \ell_{p,i} $.

In the modified formulation shown below, the constraints (\ref{z-cons:de1}) and (\ref{z-cons:de2}) for the two covariates are separated to facilitate the identification of the total unimodularity property. Since $L'_{1,1} ,..., L'_{1,k_1}$ is a partition of the control group, $\sum_{i=1}^{k_1}  |S\cap L'_{1,i}|  = |S|$. Also, because $\ell_{1,1}, ..., \ell_{1,k_1}$ are the sizes of the levels partition of the treatment group for the first covariate, it follows that $\sum_{i=1}^{k_1} \ell_{1,i}  = n$. Therefore, $\sum_{i=1}^{k_1} \left(e_{1,i}-d_{1,i}\right)  = \sum_{i=1}^{k_1} \left(|S\cap L'_{1,i}|  - \ell_{1,i} \right) = |S| - n$ . So specifying $|S| = n$ is equivalent to constraint (\ref{z-cons:sum1}) in formulation (\ref{MI_multi}) given below:


%
%
\begin{subequations}\label{MI_multi}
	\begin{align}
	& \min & \sum_{p=1}^2\sum_{i=1}^{k_p} \left(e_{p,i}+d_{p,i}\right) &\label{z-obj} \\
	&\text{s.t.} &  \sum_{j\in \cC_{1,i}} z_j +d_{1,i} -e_{1,i} =  \ell_{1,i}  &&   i = 1,...,k_1 \label{z-cons:de1} \\
	&&  \sum_{j\in \cC_{2,i}} z_j +d_{2,i} - e_{2,i} =  \ell_{2,i} &&i = 1,...,k_2  \label{z-cons:de2}\\
	&& - \sum_{i=1}^{k_1} d_{1,i} + \sum_{i=1}^{k_1} e_{1,i} = 0 && \label{z-cons:sum1} \\
	&& e_{p,i}, d_{p,i} \geq 0 && p=1,2, \quad i = 1,...,k_p   \label{z-cons:nonneg} \\
	&& z_j\in\{0, 1\} && j=1,...,n' \label{z-cons:bin}.
	\end{align}
\end{subequations}

\begin{lemma}\label{lem:TU}
	The constraint matrix of LP relaxation of formulation (\ref{MI_multi}) is totally unimodular.
\end{lemma}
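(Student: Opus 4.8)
The plan is to verify that the constraint matrix of (\ref{MI_multi}) meets the classical Heller–Tompkins sufficient condition for total unimodularity: all entries lie in $\{0,+1,-1\}$, every column has at most two nonzero entries, and the rows can be split into two classes so that in any column two equal-signed entries fall in different classes and two opposite-signed entries fall in the same class.

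First I would record the structure of the matrix. Its columns correspond to the variables $z_j$ ($j=1,\ldots,n'$), $e_{1,i},d_{1,i}$ ($i=1,\ldots,k_1$) and $e_{2,i},d_{2,i}$ ($i=1,\ldots,k_2$), and its rows are the $k_1$ equations (\ref{z-cons:de1}), the $k_2$ equations (\ref{z-cons:de2}), and the single equation (\ref{z-cons:sum1}). Since every control sample $j$ lies in exactly one level of covariate $1$ and exactly one level of covariate $2$, the column of $z_j$ has precisely two nonzeros, both $+1$: one in a row of (\ref{z-cons:de1}) and one in a row of (\ref{z-cons:de2}). The column of $d_{1,i}$ has a $+1$ in the $i$-th row of (\ref{z-cons:de1}) and a $-1$ in row (\ref{z-cons:sum1}); the column of $e_{1,i}$ has a $-1$ in the $i$-th row of (\ref{z-cons:de1}) and a $+1$ in row (\ref{z-cons:sum1}). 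Finally, the columns of $d_{2,i}$ and $e_{2,i}$ have a single nonzero entry, $+1$ and $-1$ respectively, in the $i$-th row of (\ref{z-cons:de2}). Hence every column has at most two nonzeros, all equal to $\pm1$.

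Next I would exhibit the row partition: let $R_1$ be the $k_1$ rows of (\ref{z-cons:de1}) together with the coupling row (\ref{z-cons:sum1}), and let $R_2$ be the $k_2$ rows of (\ref{z-cons:de2}). Then each $z_j$ column has its two $+1$ entries split between $R_1$ and $R_2$ (equal signs, different classes); each $d_{1,i}$ column and each $e_{1,i}$ column has both of its oppositely signed entries inside $R_1$; and the single-entry columns of $d_{2,i}$ and $e_{2,i}$ impose no condition. This is exactly the Heller–Tompkins criterion, so the matrix is totally unimodular. Equivalently, and more suggestively, negating every row of (\ref{z-cons:de2}) turns the matrix into one whose every column contains at most one $+1$ and at most one $-1$, i.e., (a submatrix of) the node–arc incidence matrix of a directed network; since row negations preserve total unimodularity, this gives the same conclusion while also exposing the minimum-cost-flow structure exploited in Section \ref{sec:MCNF}.

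I do not expect a genuine obstacle here. The one point that needs care is the placement of the coupling row (\ref{z-cons:sum1}): it must be assigned to the same class as the covariate-$1$ rows, because each $d_{1,i}$ and each $e_{1,i}$ column has one entry in a covariate-$1$ row and an oppositely signed entry in (\ref{z-cons:sum1}); putting (\ref{z-cons:sum1}) in $R_2$ would break the condition. One should also check that no column links a covariate-$1$ row directly to a covariate-$2$ row other than the $z_j$ columns, and that those do carry equal signs — both facts are immediate from the definitions — after which the verification is routine.
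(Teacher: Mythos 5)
Your proposal is correct and takes essentially the same route as the paper: the paper's proof negates the rows of (\ref{z-cons:de2}) and checks that every column of the resulting $\{0,\pm1\}$ matrix has at most one $+1$ and at most one $-1$ (Theorem \ref{thm:TU}), which is precisely the ``equivalently'' reformulation you give at the end, and your primary phrasing via the Heller--Tompkins row bipartition (covariate-$1$ rows together with (\ref{z-cons:sum1}) in one class, covariate-$2$ rows in the other) is the same column-by-column verification in an equivalent form. The column structure you record, including the careful placement of the coupling row (\ref{z-cons:sum1}) with the covariate-$1$ rows, matches the paper's case analysis, so there is no gap.
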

\begin{proof}
First, it is noted that in the constraint matrix of (\ref{MI_multi}) each entry is $0$, $1$ or $-1$.
Multiplying some rows of a totally unimodular matrix by $-1$ preserves total unimodularity. Consider the matrix resulting by multiplying the rows of constraint (\ref{z-cons:de2}) by $-1$. As shown next, each column in this new matrix has at most one $1$ and at most one $-1$. Hence, by a well-known theorem (see Appendix \ref{sec:flow} Theorem \ref{thm:TU}) the matrix is totally unimodular.
	 \begin{itemize}
	 	 \item[(1)] Both $\{\cC_{1,1}, ..., \cC_{1,k_1}\}$ and $\{\cC_{2,1}, ..., \cC_{2,k_2}\}$ are partitions of the control group, so $\cC_{1,1}, ..., \cC_{1,k_1}$ are mutually disjoint, and $\cC_{2,1}, ..., \cC_{2,k_2}$ are mutually disjoint. The column of each $z_j$ has exactly one $1$ in rows corresponding to (\ref{z-cons:de1}), and one $-1$ (after multiplication) in rows corresponding to (\ref{z-cons:de2}).
	 	 \item[(2)] For each $i$, the column of $d_{1,i}$ has exactly one $1$ in rows corresponding to (\ref{z-cons:de1}) and exactly one $-1$ in rows corresponding to (\ref{z-cons:sum1}); the column of $e_{1,i}$ has exactly one $-1$ in rows corresponding to (\ref{z-cons:de1})) and exactly one $1$ in rows corresponding to (\ref{z-cons:sum1}).
	 	 \item[(3)] For each $i$, the column of $d_{2,i}$ has exactly one non-zero, $1$ or $-1$, entry in rows corresponding to (\ref{z-cons:de2}); the column of $e_{2,i}$ has exactly one non-zero, $1$ or $-1$, entry in rows corresponding to (\ref{z-cons:de2}).
	 \end{itemize}

	Since the new matrix has at most one $1$ and at most one $-1$ in each column, it is totally unimodular.
\end{proof}

Formulation (\ref{MI_multi}) is in fact also a minimum cost network flow (MCNF) formulation. See Appendix \ref{sec:flow} for a generic formulation of MCNF. A generic MCNF formulation has exactly one $1$ and one $-1$ in each column of the constraint matrix.  To make formulation (\ref{MI_multi}) have this structure, we multiple all coefficients in constraint (\ref{z-cons:de2}) by $-1$, and add a redundant constraint $\sum_{i=1}^{k_2} d_{2,i} - \sum_{i=1}^{k_2} e_{2,i} = 0$, which, like constraint (\ref{z-cons:sum1}), is also equivalent to $|S| = n$.
In the next section, we streamline this formulation to derive a more efficient network flow formulation for this problem.

\section{Network flow formulation for $P=2$} \label{sec:MCNF}

Here we use the level-intersection sizes as variables, $x_{i_1,i_2}$ for $i_1=1,...,k_1, i_2=1,...,k_2$. These variables can also be written as $x_{i_1,i_2}= \sum _{j \in \cC_{1,i_1} \cap \cC_{2,i_2}} z_j$.
With these decision variables we get the following network flow formulation:  
\begin{subequations}\label{MI}
	\begin{align}
	& \min & \sum_{p=1}^2\sum_{i=1}^{k_p} \left(e_{p,i}+d_{p,i}\right) &\label{obj} \\
	&\text{s.t.} &  \sum_{i_2=1}^{k_2}  x_{i_1,i_2} +d_{1,i_1} -e_{1,i_1} = \ell_{1,i_1}   && i_1=1,...,k_1  \label{cons:de1} \\
	&&  -\sum_{i_1=1}^{k_1}  x_{i_1,i_2} -d_{2,i_2} +e_{2,i_2} = -\ell_{2,i_2}   && i_2=1,..., k_2 \label{cons:de2}\\
	&& - \sum_{i_1=1}^{k_1} d_{1,i_1} + \sum_{i_1=1}^{k_1} e_{1,i_1} = 0 && \label{cons:sum1} \\
	&&  \sum_{i_2=1}^{k_2} d_{2,i_2} - \sum_{i_2=1}^{k_2} e_{2,i_2} = 0 && \label{cons:sum2}\\
	&& e_{p,i}, d_{p,i} \geq 0 &&  p=1,2, \quad k = 1,...,k_p  \label{cons:nonneg} \\
	&& 0\leq x_{i_1,i_2} \leq u_{i_1,i_2}&&  i_1=1,...,k_1, \quad i_2=1,..., k_2 \label{cons:bin}
	\end{align}.
\end{subequations}

The formulation (\ref{MI}) is a minimum cost network flow problem.  The corresponding network is shown in Figure \ref{fig:mcnf}, where all capacity lower bounds are $0$, and each arc has a cost per unit flow and upper bound associated with it.  Nodes of type $(1,i_1)$  each has supply of $\ell _{1,i_1}$ and nodes of type $(2,i_2)$  each has demand of $\ell _{2,i_2}$ In Figure \ref{fig:mcnf}, for each $i_1$ and $i_2$, the flow on the arc between node $(1,i_1)$ and node $(2,i_2)$ represents variable $x_{i_1,i_2}$; arc from node $1$ to node $(1,i_1)$ represents the excess $e_{1,i_1}$; arc to node $1$ from any node $(1,i_1)$ represents the deficit $d_{1,i_1}$; arc from node $2$ to any node $(2,i_2)$ represents the deficit $d_{2,i_2}$; arc to node $2$ from any node $(2,i_2)$ represents the excess $e_{2,i_2}$. So the per unit arc cost should be 1 for arcs between node 1 or 2 and any node in $\{(1,1), (1,2), ..., (1,k_1) \}\cup \{(2,1), (2,2), ...,(2,k_2)\}$. All other arcs have cost $0$. It is easy to verify that constraints (\ref{cons:de1}) are corresponding to the flow balance at nodes $(1,i_1)$ for all $i_1$, constraints (\ref{cons:de2}) are corresponding to the flow balance at nodes $(2,i_2)$ for all $i_2$.    Constraint (\ref{cons:sum1}) is corresponding to the flow balance at node 1, and constraint (\ref{cons:sum2}) is corresponding to the flow balance at node 2.

\begin{figure}[htb]
	\centering
	\begin{tikzpicture}	
	\begin{scope}[every node/.style={circle,thick,draw}]
	\node (11) at (0,0) {1,1};
	\node (12) at (0,-3) {1,2};
	\node (1k1) at (0,-9) {1,$k_1$};
	\node (21) at (5,0) {2,1};
	\node (22) at (5,-3) {2,2};
	\node (2k2) at (5,-9) {2,$k_2$};
	\node (1) at (-3,-4.5) {$1$};
	\node (2) at (8,-4.5) {$2$};
	\end{scope}

	\begin{scope}[every node/.style={thick}]
	\node (1.) at (0,-6) {$\vdots$};
	\node (2.) at (5,-6) {$\vdots$};
	\node (legend) at (2.5,2) {arc legend: (cost, upperbound)};
	
	\node at (0,0.8) {$(\ell_{1,1})$};
	\node at (0,-2.2) {$(\ell_{1,2})$};
	\node at (0,-9.8) {$(\ell_{1,k_1})$};
	\node at (5,0.8) {$(-\ell_{2,1})$};
	\node at (5,-2.2) {$(-\ell_{2,2})$};
	\node at (5,-9.8) {$(-\ell_{2,k_2})$};
	
	\end{scope}
	
	\begin{scope}[
	every edge/.style={draw=black, thick,font=\small,align=center, auto}]
	\path [->] (11) edge node {$(0, u${\tiny $_{1,1}$}$)$} (21);
	\path [->] (11) edge node[right] {$(0, u${\tiny $_{1,2}$}$)$} (22);
	\path [->] (11) edge node[right] {$(0, u${\tiny $_{1,k_2}$}$)$} (2k2);
	\path [->] (12) edge node {$(0, u${\tiny $_{2,2}$}$)$}  (22);
	\path [->] (1k1) edge node {$(0, u${\tiny $_{k_1,k_2}$}$)$}  (2k2);
	
	\draw [->] (1) edge [out=80,in=200] node[above,left] {$(1,\infty)$} (11);
	\draw [->] (1) edge [out=50,in=185] node[above,right] {$(1,\infty)$} (12);
	\draw [->] (1) edge [out=-45,in=110] node[above,left] {$(1,\infty)$} (1k1);
	\draw [->] (11) edge [out=250,in=60] node[below,left] {$(1,\infty)$} (1);
	\draw [->] (12) edge [out=215,in=20] node[below] {$(1,\infty)$}  (1);
	\draw [->] (1k1) edge [out=160,in=-80] node[below,left] {$(1,\infty)$} (1);
	
	\draw [->] (21) edge [in=100,out=-20] node[above,right] {$(1,\infty)$} (2);
	\draw [->] (22) edge [in=130,out=-5] node[above,left] {$(1,\infty)$} (2);
	\draw [->] (2k2) edge [in=225,out=70] node[above,right] {$(1,\infty)$} (2);
	\draw [->] (2) edge [in=-70,out=120] node[below,right] {$(1,\infty)$} (21);
	\draw [->] (2) edge [in=-35,out=160] node[below] {$(1,\infty)$}  (22);
	\draw [->] (2) edge [in=20,out=-100] node[below,right] {$(1,\infty)$} (2k2);

	\end{scope}
	
	\end{tikzpicture}
	\caption{MCNF graph corresponding to formulation (\ref{MI}).}
	\label{fig:mcnf}
\end{figure}
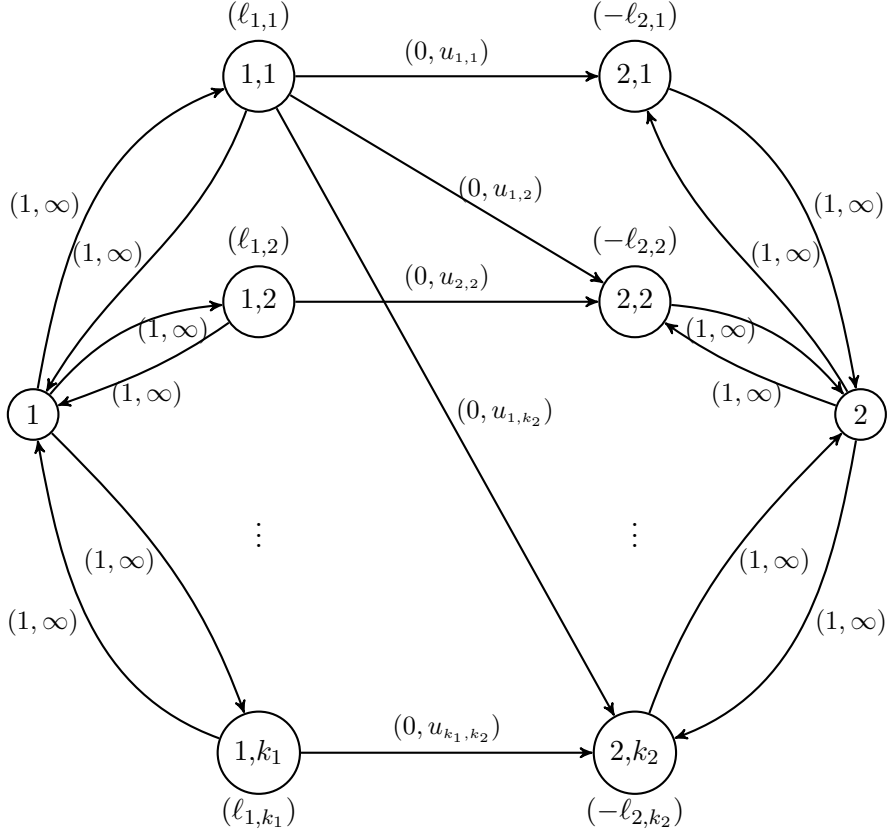

\begin{theorem}\label{thm:timeMCNF}
	The 2-covariate min-imbalance problem is solved as a minimum cost network flow problem in $O(n\cdot (n' + n\log n))$ time.
\end{theorem}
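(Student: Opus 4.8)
The plan is to obtain the running time from three observations: formulation~(\ref{MI}) is a minimum cost network flow (MCNF) problem that correctly models the two‑covariate min‑imbalance problem, its underlying network can be taken to have $O(n)$ nodes and $O(n')$ arcs, and it carries only $O(n)$ units of flow, so the successive shortest paths algorithm solves it within the claimed bound. For correctness, I would first note that since (\ref{MI}) is an MCNF on the network of Figure~\ref{fig:mcnf}, its constraint matrix is a node--arc incidence matrix, hence totally unimodular, so (\ref{MI}) has an integral optimal solution $(x,e,d)$. Such a solution yields an optimal selection $S$: for each $(i_1,i_2)$ pick any $x_{i_1,i_2}$ of the $u_{i_1,i_2}=|L'_{1,i_1}\cap L'_{2,i_2}|$ control samples in that level‑intersection, which is possible since $0\le x_{i_1,i_2}\le u_{i_1,i_2}$. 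To see that the optimal value of (\ref{MI}) equals the minimum imbalance, observe that in an optimal solution we never have $e_{p,i}>0$ and $d_{p,i}>0$ simultaneously --- decreasing both by their minimum keeps all equality constraints (\ref{cons:de1})--(\ref{cons:sum2}) satisfied while strictly lowering the objective --- so $e_{p,i}+d_{p,i}=\big||S\cap L'_{p,i}|-\ell_{p,i}\big|$ and the objective of (\ref{MI}) is exactly $IM(S)$; conversely any size‑$n$ selection gives a feasible point of (\ref{MI}) of value $IM(S)$, so no optimality is lost in passing to (\ref{MI}).

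Next I would bound the network. A harmless preprocessing merges, for each covariate $p$, all levels containing no treatment sample ($\ell_{p,i}=0$) into a single level: their combined imbalance contribution $\sum_{i:\ell_{p,i}=0}|S\cap L'_{p,i}|=\big|S\cap\bigcup_{i:\ell_{p,i}=0}L'_{p,i}\big|$ is unchanged by the merge, and the merge affects neither the other covariate nor $|S|$, so the optimal value is preserved. Afterwards every level has $\ell_{p,i}\ge 1$ except at most one per covariate, and since $\sum_i\ell_{p,i}=n$ there are at most $n+1$ levels, so $k_1,k_2=O(n)$. Hence the network of Figure~\ref{fig:mcnf} has $N=2+k_1+k_2=O(n)$ nodes and, besides the $2k_1+2k_2=O(n)$ excess/deficit arcs, one $x$‑arc per non‑empty level‑intersection, of which there are at most $n'\ge n$ (disjoint, each with at least one control sample), so $m=O(n')$ arcs. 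Summing (\ref{cons:de1}) over $i_1$ and using (\ref{cons:sum1}) gives $\sum_{i_1,i_2}x_{i_1,i_2}=\sum_{i_1}\ell_{1,i_1}=n$, and the total $e$‑ and $d$‑flow is likewise at most $n$, so the network carries $O(n)$ units of flow.

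Finally, I would solve (\ref{MI}) by the successive shortest paths algorithm. All arc costs are $0$ or $1$, hence nonnegative, so we may start from zero node potentials and thereafter maintain reduced‑cost nonnegativity, making each augmenting shortest path computable by a single Dijkstra run costing $O(m+N\log N)=O(n'+n\log n)$ with a Fibonacci heap. Each augmentation routes an integral amount $\ge 1$ of the $O(n)$ units of flow, so there are $O(n)$ augmentations, for a total of $O\big(n\cdot(n'+n\log n)\big)$; recovering the selection from the final flow takes an additional $O(n')$ time. The only step that genuinely needs care is the merge that brings the node count down to $O(n)$: without it, $N$ --- and hence every Dijkstra call --- would be governed by the possibly much larger number of control‑only levels. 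Everything else is a direct application of standard min‑cost‑flow machinery, with the $O(n)$ bound on the number of augmentations following immediately from $\sum_{i_1}\ell_{1,i_1}=n$.
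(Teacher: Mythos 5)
Your proposal is correct and follows essentially the same route as the paper: the successive shortest paths algorithm with node potentials, $O(n)$ augmentations bounded by the total supply, and a Dijkstra cost of $O(n'+n\log n)$ per augmentation using $|V|=O(k_1+k_2)=O(n)$ and $|A|=O(\min\{n',k_1k_2\})$. Your two additions --- verifying that an optimal $(x,e,d)$ has $e_{p,i}d_{p,i}=0$ so the objective equals $IM(S)$, and the preprocessing merge of levels with $\ell_{p,i}=0$ to justify $k_1,k_2=O(n)$ --- are sound refinements of points the paper treats as immediate (the latter being implicitly assumed when it asserts $k_1+k_2=O(n)$), not a different method.
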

\begin{proof}
We choose the algorithm of {\em successive shortest paths} that is particularly efficient for a MCNF with ``small" total supply to solve the network flow problem of the 2-covariate min-imbalance problem.

The successive shortest path algorithm iteratively selects a node $s$ with excess supply (supply not yet sent to some demand node) and a node $t$ with unfulfilled demand and sends flow from $s$ to $t$ along a shortest path in the residual network \cite{Jewell58}, \cite{Iri60}, \cite{BG61}. The algorithm terminates when the flow satisfies all the flow balance constraints. Since at each iteration, the number of remaining units of supply to be sent is reduced by at least one unit, the number of iterations is bounded by the total amount of supply.  For our problem the total supply is $O(n)$.

At each iteration, the shortest path can be solved with Dijkstra's algorithm of complexity $O(|A| + |V| \log |V|)$, where $|V|$ is number nodes and $|A|$ is number of arcs \cite{Tomizawa71}, \cite{EK72}. In our formulation, $|V|$ is $O\left(k_1+k_2\right)$, which is at most $O(n)$. Since the number of nonempty sets $\cC_{1,i_1} \cap \cC_{2,i_2}$ is at most $\min\{n', k_1k_2 \}$, the number of arcs $|A|$ is $O(\min\{n', k_1k_2 \})$.


Hence, the total running time of applying the successive shortest path algorithm with node potentials on our formulation is $O(n\cdot (n' + n\log n))$.
\end{proof}

\section{Maximum flow formulation for $P=2$}\label{sec:max-flow}
Here we show a maximum flow (max-flow) formulation (see Appendix \ref{sec:flow} for a generic formulation of max-flow problem) for the min-imbalance problem with 2 covariates. Unlike the previous formulations, the maximum flow solution requires further manipulation in order to derive an optimal solution to the min-imbalance problem with 2 covariate.  That max-flow graph is illustrated in Figure (\ref{fig:maxf}). The source node $s$ can send at most $\ell_{1,i_1}$ units of flow to node $(1,i_1)$ for each $i_1=1,..., k_1$, the sink node can get at most $\ell_{2,i_2}$ units of flow from node $(2,i_2)$ for each $i_2=1,..., k_2$, and there can be a flow from node $(1,i_1)$ to node $(2,i_2)$ with amount bounded by $u_{i_1,i_2}$, for $i_1=1,..., k_1, i_2=1,..., k_2$.

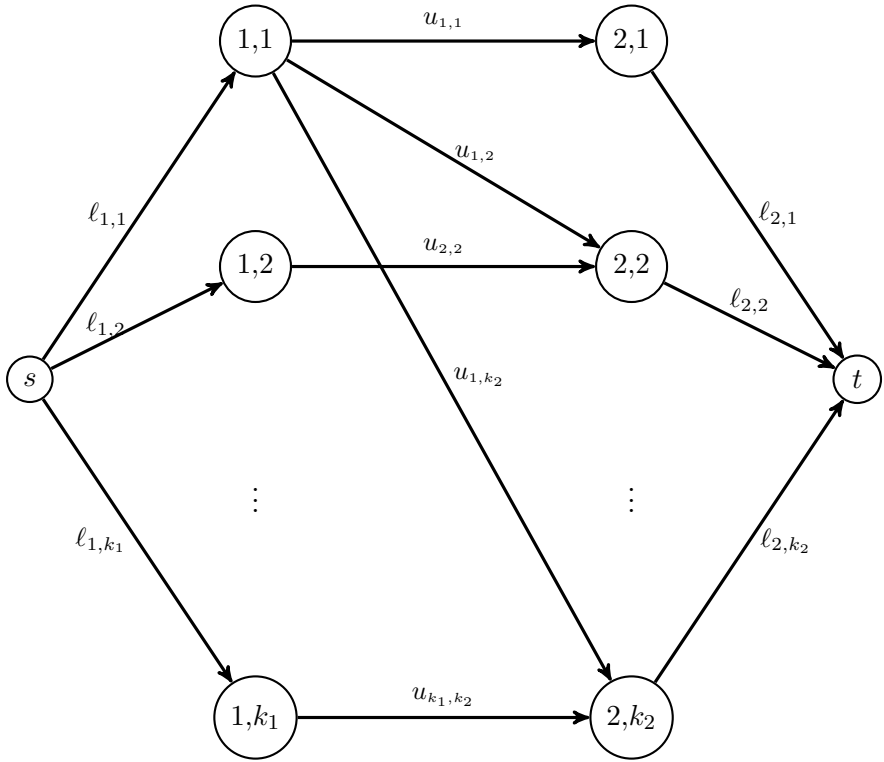
\begin{figure}[htb]
	\centering
	\begin{tikzpicture}	
	\begin{scope}[every node/.style={circle,thick,draw}]
	\node (11) at (0,0) {1,1};
	\node (12) at (0,-3) {1,2};
	\node (1k1) at (0,-9) {1,$k_1$};
	\node (21) at (5,0) {2,1};
	\node (22) at (5,-3) {2,2};
	\node (2k2) at (5,-9) {2,$k_2$};
	\node (s) at (-3,-4.5) {$s$};
	\node (t) at (8,-4.5) {$t$};
	\end{scope}

	\begin{scope}[every node/.style={thick}]
	\node (1.) at (0,-6) {$\vdots$};
	\node (2.) at (5,-6) {$\vdots$};
	\node (legend) at (2.5,2) {arc legend:  upperbound};
	
	\end{scope}
	
	\begin{scope}[
	every edge/.style={draw=black,very thick,font=\small,align=center, auto}]
	\path [->] (11) edge node {$u${\tiny $_{1,1}$}}  (21);
	\path [->] (11) edge node[right] {$u${\tiny $_{1,2}$}} (22);
	\path [->] (11) edge node[right] {$u${\tiny $_{1,k_2}$}} (2k2);
	\path [->] (12) edge node {$u${\tiny $_{2,2}$}} (22);
	\path [->] (1k1) edge node {$u${\tiny $_{k_1,k_2}$}}  (2k2);
	\path [->] (s) edge node[above,left] {$\ell_{1,1}$} (11);
	\path [->] (s) edge node[above,left] {$\ell_{1,2}$} (12);
	\path [->] (s) edge node[above,left] {$\ell_{1,k_1}$} (1k1);
	\path [->] (21) edge node[right] {$\ell_{2,1}$} (t);
	\path [->] (22) edge node[above] {$\ell_{2,2}$} (t);
	\path [->] (2k2) edge node[above, right] {$\ell_{2,k_2}$} (t);
	\end{scope}
	
	\end{tikzpicture}
	\caption{Maximum flow graph}
	\label{fig:maxf}
\end{figure}

Let the maximum flow value for the max-flow problem presented in Figure (\ref{fig:maxf}) be denoted by $f^*$, and let $\x ^*$ be the optimal flow vector, with $x^*_{i_1,i_2}$ denoting the flow amount between node $(1,i_1)$ and node $(2,i_2)$. It is obvious that $\sum_{i_1=1}^{k_1} \sum_{i_2=1}^{k_2} x^*_{i_1,i_2}=f^*\leq \sum_{i_1=1}^{k_1} \ell_{1,i_1} = n$. That means an initial selection $S'$ generated by selecting the prescribed number of control samples as in the optimal max-flow solution, i.e., selecting $x^*_{i_1,i_2}$ control samples from $\cC_{1,i_1}\cap\cC_{2,i_2}$ is of size $f^*$.  In order to get a feasible solution for the min-imbalance problem it is required to select $n-f^*$ additional control samples. The selection $S'$ has no positive excess, only non-negative deficits with respect to the levels of both covariates.  This is because $\sum_{i_2=1}^{k_2}x^*_{i_1,i_2}\leq \ell_{1,i_1}$ due to the upper bound of arc from $s$ to $(1,i_1)$ for each $i_1$, and $\sum_{i_1=1}^{k_1}x^*_{i_1,i_2}\leq \ell_{2,i_2}$ due to the upper bound of arc from $(2,i_2)$ to $t$ for each $i_2$.
To recover an optimal solution for the min-imbalance problem from the initial set $S'$, we add up to $n-f^*$ unselected control samples, one at a time, each corresponding to a level with positive deficit under either covariate $1$ or $2$. This process is repeated until either $n-f^*$ such control samples are found, or until no such control sample exists. In the latter case, to complete the size of the selection, any randomly selected control samples are added. Algorithm \ref{alg} is a formal statement of this process of recovering an optimal solution of the min-imbalance problem from the initial selection $S'$.

\begin{algorithm}[H]
	\caption{}
	\begin{algorithmic}
		\State {\bf Initialization step:} Select $x^* _{i_1,i_2}$ number of control samples from $\cC_{1,i_1}\cap\cC_{2,i_2}$ in set $S'$. 
		\While{$|S'|<n$}
		\If{there exists an unselected control sample $j\notin S'$ whose covariate 1 level is $i_1$ and covariate 2 level is $i_2$, such that $|S'\cap \cC_{1,i_1}| < \ell_{1,i_1}$ or $|S'\cap \cC_{2,i_2}| < \ell_{2,i_2}$},
		\State $S'\leftarrow S'\cup \{j\}$.
		\Else
		\State Let $S''=S'$ and let $S^+$ be any $n-|S'|$ control samples $\notin S'$.  Set $S'\leftarrow S'\cup S^+$.
		\EndIf
		\EndWhile
		\State Output $S'$.
	\end{algorithmic}\label{alg}
\end{algorithm}

To show that Algorithm \ref{alg} provides an optimal solution to the min-imbalance problem, we distinguish two cases of Algorithm \ref{alg}: (1) $S^+=\emptyset$ and (2) $|S^+|\geq 1$.
In the first case, there is, at each iteration, at least one control sample that belongs to some level with positive deficit.  In Theorem \ref{thm:maxf_easy} we prove that the output $S'$ of Algorithm \ref{alg} is an optimal solution in this case.

\begin{theorem}\label{thm:maxf_easy}
If $S^+=\emptyset$ then the output selection $S'$ of Algorithm \ref{alg} is optimal for the min-imbalance problem, with an optimal objective value of $2(n-f^*)$.
\end{theorem}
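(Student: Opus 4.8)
The plan is to sandwich the objective value of $S'$ between two matching bounds: a \emph{lower bound} $IM(S)\ge 2(n-f^*)$ valid for every feasible selection $S$ with $|S|=n$, and the \emph{upper bound} $IM(S')\le 2(n-f^*)$ for the output $S'$ of Algorithm \ref{alg} in the case $S^+=\emptyset$. In this case every pass through the while loop proceeds via the ``if'' branch and adds exactly one control sample, so the algorithm terminates with $|S'|=n$; hence $S'$ is itself a feasible selection, and the two bounds together give $IM(S')=2(n-f^*)$ and certify that no selection does better.

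For the lower bound, first record the elementary identity $IM(S)=2\bigl(E_1(S)+E_2(S)\bigr)$ valid for any $S$ with $|S|=n$, where $E_p(S)=\sum_{i}e_{p,i}(S)$ denotes the total excess under covariate $p$: since $\sum_{i_1}dis(S,1,i_1)=|S|-n=0$, the total excess equals the total deficit under covariate $1$, so $\sum_{i_1}|dis(S,1,i_1)|=2E_1(S)$, and symmetrically $\sum_{i_2}|dis(S,2,i_2)|=2E_2(S)$; summing gives the identity. Thus it suffices to prove $E_1(S)+E_2(S)\ge n-f^*$. Given $S$, set $x_{i_1,i_2}=|S\cap \cC_{1,i_1}\cap \cC_{2,i_2}|$, so that $0\le x_{i_1,i_2}\le u_{i_1,i_2}$ and $\sum_{i_1,i_2}x_{i_1,i_2}=n$, while the row sums $\sum_{i_2}x_{i_1,i_2}$ and column sums $\sum_{i_1}x_{i_1,i_2}$ may exceed $\ell_{1,i_1}$, resp.\ $\ell_{2,i_2}$. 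Now ``repair'' $\x$ into a feasible flow of the max-flow network of Figure \ref{fig:maxf}: for each row $i_1$ with sum exceeding $\ell_{1,i_1}$, decrease entries of that row by a total of exactly $e_{1,i_1}(S)$ to bring the row sum down to $\ell_{1,i_1}$ (possible since the entries are nonnegative and their sum strictly exceeds $\ell_{1,i_1}\ge 0$); then perform the analogous correction on the columns, which removes at most $\sum_{i_2}e_{2,i_2}(S)$ further units, since the first phase only decreased column sums and the second phase preserves the already-established row bounds because it only decreases entries. The result is a feasible flow obtained from $\x$ by removing at most $E_1(S)+E_2(S)$ units, so $f^*\ge n-\bigl(E_1(S)+E_2(S)\bigr)$, which is the desired inequality.

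For the upper bound, track $IM(S')$ over the execution of Algorithm \ref{alg}. The initial $S'$ satisfies $\sum_{i_2}x^*_{i_1,i_2}\le \ell_{1,i_1}$ and $\sum_{i_1}x^*_{i_1,i_2}\le \ell_{2,i_2}$, these being the capacities of the source and sink arcs in Figure \ref{fig:maxf}, so $S'$ has zero excess under both covariates, and $IM(S')=\sum_{i_1}\bigl(\ell_{1,i_1}-\sum_{i_2}x^*_{i_1,i_2}\bigr)+\sum_{i_2}\bigl(\ell_{2,i_2}-\sum_{i_1}x^*_{i_1,i_2}\bigr)=(n-f^*)+(n-f^*)=2(n-f^*)$. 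At an ``if'' step the algorithm adds one sample $j$ lying in a cell $(i_1,i_2)$ with $|S'\cap \cC_{1,i_1}|<\ell_{1,i_1}$ or $|S'\cap \cC_{2,i_2}|<\ell_{2,i_2}$; the contribution of level $(1,i_1)$ to $IM$ decreases by $1$ if that level currently has positive deficit and increases by $1$ otherwise, and symmetrically for level $(2,i_2)$, while no other level changes. Since the ``if'' condition guarantees that at least one of the two levels has positive deficit, the net change of $IM$ is $-2$ or $0$; in particular $IM(S')$ never increases, so on termination $IM(S')\le 2(n-f^*)$.

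Combining, $2(n-f^*)\le IM(S')\le 2(n-f^*)$, hence $IM(S')=2(n-f^*)$, and by the lower bound applied to an arbitrary feasible selection this is the minimum possible imbalance, so $S'$ is optimal. The only delicate step is the lower bound: one must verify that handling the row excesses before the column excesses never recreates a row excess — it cannot, since every operation merely decreases an entry — and that the total number of units removed is at most $E_1(S)+E_2(S)$; the remaining arguments are routine bookkeeping.
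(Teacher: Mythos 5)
Your proof is correct, and its overall shape matches the paper's: compute the imbalance of the initial selection $S'$, control how it evolves through the \textbf{if} steps, and bound from below the imbalance of an arbitrary selection of size $n$ by converting it into a feasible flow for the network of Figure \ref{fig:maxf} and invoking the maximality of $f^*$. The differences are worth noting, and they favor your write-up in one place. The paper proves that the imbalance stays \emph{exactly} $2(n-f^*)$ at every \textbf{if} step; this requires the claim that the two levels of an added sample cannot both carry positive deficit, which the paper justifies by an augmenting-path argument relative to the maximum flow. You dispense with that step entirely: you show only that the imbalance never increases (net change $-2$ or $0$ per added sample), and you recover equality from your universal lower bound $IM(S)\geq 2(n-f^*)$; the sandwich then implies, a posteriori, that the decrease-by-$2$ case never occurs, so the residual-network reasoning is not needed. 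Your lower bound is also stated directly rather than by contradiction: you use the identity $IM(S)=2\bigl(E_1(S)+E_2(S)\bigr)$ (valid because $|S|=n$ forces total excess to equal total deficit under each covariate) together with a two-phase row/column repair of the cell counts $x_{i_1,i_2}=|S\cap L'_{1,i_1}\cap L'_{2,i_2}|$ into a feasible flow, removing at most $E_1(S)+E_2(S)$ units, whence $f^*\geq n-E_1(S)-E_2(S)$. The paper instead assumes a better selection exists and removes samples one at a time from levels with positive excess, tracking deficits, to reach the same contradiction with $f^*$. The two constructions are equivalent in substance, but your version packages the lower bound as a clean statement about every feasible selection and avoids the one genuinely delicate step in the paper's argument.
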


\begin{proof}
	First, we show that the total imbalance of the selection $S'$ is $IM(S')=2(n-f^*)$. At the initialization step the selection $S'$ has only deficits for all levels, with total deficit for covariate 1, $\sum_{i_1=1}^{k_1}(\ell_{1,i_1}-\sum_{i_2=1}^{k_2}x_{i_1,i_2})=n-f^*$, and total deficit for covariate 2, $\sum_{i_2=1}^{k_2}(\ell_{2,i_2}-\sum_{i_1=1}^{k_1}x_{i_1,i_2})=n-f^*$.  At each iteration, there is an added control sample, say in $\cC_{1,i_1}\cap\cC_{2,i_2}$, such that either $\cC_{1,i_1}$ or $\cC_{2,i_2}$ has a positive deficit with respect to $S'$. It is however impossible for both $\cC_{1,i_1}$ and $\cC_{2,i_2}$ to have a positive deficit with respect to $S'$ since otherwise, there is an $s,t$-augmenting path, from $s$ to node $(1,i_1)$, to node $(2, i_2)$, to $t$, along which the flow can be increased by at least one unit. This is in contradiction to the optimality of the max-flow solution $\x ^*$.  As a result, at each iteration where a control sample is added, the total deficit is reduced by one unit, and the total excess is {\em increased} by one unit. Thus, at each iteration of the {\bf if} step, the sum of total deficit and excess remains the same, namely $2(n-f^*)$.
	
Suppose, by contradiction, that there exists a selection $S^*$ for which the total imbalance is lower, $IM(S^*)<2(n-f^*)$.  We repeat the following iterative procedure of removing samples from $S^*$ until there is no positive excess remaining: while there is a level of either covariate with positive excess with respect to $S^*$, we remove one sample of $S^*$ that belongs to this level. Each such iteration results in the total excess reducing by at least 1 unit and the total deficit increasing by at most 1 unit, and therefore the sum of total deficit and excess does not increase. So when this iterative procedure ends, the total excess is zero and the total deficit is at most $IM(S^*)$. Let $x_{i_1,i_2}$ be the number of samples remaining in $S^*\cap\cC_{1,i_1}\cap\cC_{2,i_2}$ after this excess removing procedure. Since there is no positive excess, $\x$ is a feasible solution for the max-flow problem with the flow between node $(1,i_1)$ and node $(2,i_2)$ equal to $x_{i_1,i_2}$.  The sum of deficits associated with this remaining set is $n-\sum_{i_1=1}^{k_1}\sum_{i_2=1}^{k_2} x_{i_1,i_2}$ for covariate 1 and $n-\sum_{i_1=1}^{k_1}\sum_{i_2=1}^{k_2} x_{i_1,i_2}$ for covariate 2, for a total of $2(n-\sum_{i_1=1}^{k_1}\sum_{i_2=1}^{k_2} x_{i_1,i_2})$, which is at most $IM(S^*)$.  Therefore, the total flow value, $\sum_{i_1=1}^{k_1}\sum_{i_2=1}^{k_2} x_{i_1,i_2}    $, satisfies that it is at least $n- \frac{IM(S^*)}{2}$. Since $n-\frac{IM(S^*)}{2} > n-(n-f^*)=f^*$, it follows that the value of the feasible flow induced by the set $S^*$ is greater than the maximum flow value $f^*$, which contradicts the optimality of $f^*$.
\end{proof}

We now address the second case where $|S^+|\geq 1$ and $|S''|<n$.  In this case, the total imbalance of $S''$ is,  from the arguments in the proof of Theorem \ref{thm:maxf_easy}, $IM(S'')=2(n-f^*)$.
Each one of the $S^+$ samples selected adds 1 unit of excess to each covariate, resulting in the addition of 2 units of excess to the imbalance. Therefore, the total imbalance of the output solution is $2(n-f^*)+2|S^+|$.  We next show what the value of $|S^+|$ is, and then demonstrate that any feasible selection to the min-imbalance problem has total imbalance of at least $2(n-f^*)+2|S^+|$.  This will prove that the output of Algorithm \ref{alg}, $S'$, is an optimal solution to the min-imbalance problem.

It will be useful to consider an equivalent form of Algorithm \ref{alg}.
For each level $i$ of covariate $p$ that has $|S'\cap \cC_{p,i}| < \ell_{p,i}$, we add the largest number possible of available control samples in $\cC_{p,i}$ so long as the total does not exceed $n$.  This number is $\min \{ \ell_{p,i}-|S'\cap \cC_{p,i}| , \ell'_{p,i}-|S'\cap \cC_{p,i}| \}$.   Let $\bar{\ell}_{p,i}=\min\{\ell_{p,i}, \ell'_{p,i}\}$  then for each $p,i$ that has $|S'\cap \cC_{p,i}| < \ell_{p,i}$ we add $\bar{\ell}_{p,i}-|S'\cap \cC_{p,i}|$ previously unselected control samples to $S'$.  The outcome of this equivalent procedure is exactly the same as that of Algorithm \ref{alg}.  In the case that $|S^+|\geq 1$ there is an insufficient number of control samples to add to $S'$ after for all levels the largest number possible has been added.  Therefore, at the end of this process, the {\bf if} step returns that another unselected control sample does not exist, and the total number of control samples of $S''$, for each level $i$ of covariate $p$, is $\bar{\ell}_{p,i}$.


\begin{lemma}
	If $|S^+| \geq 1$ (and $|S''|= n- |S^+|<n$) then $|S^+|=n-(\bar{\ell}_1+\bar{\ell}_2-f^*)$ where $\bar{\ell}_1=\sum_{i_1=1}^{k_1} \bar{\ell}_{1,i_1}$ and $\bar{\ell}_2=\sum_{i_2=1}^{k_2} \bar{\ell}_{2,i_2}$.
\end{lemma}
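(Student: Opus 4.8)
The plan is to pin down $|S''|$ exactly and then use $|S^+|=n-|S''|$, which holds by hypothesis. I would obtain $|S''|$ from two ingredients: a precise description of the per-level \emph{deficits} of $S''$, read off from the termination condition of Algorithm~\ref{alg}, together with the identity $IM(S'')=2(n-f^*)$ which, as observed just before the lemma, follows from the argument in the proof of Theorem~\ref{thm:maxf_easy}.

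For the deficits: by hypothesis $|S^+|\ge 1$, so the \textbf{else} branch of Algorithm~\ref{alg} was executed; at that moment $S'=S''$, $|S''|<n$, and no unselected control sample $j\notin S''$ satisfies the \textbf{if} condition. I claim that for every $p\in\{1,2\}$ and every level $i$ one has $d_{p,i}(S'')=\ell_{p,i}-\bar{\ell}_{p,i}$. If $|S''\cap\cC_{p,i}|\ge\ell_{p,i}$ then $d_{p,i}(S'')=0$; moreover $\ell_{p,i}\le|S''\cap\cC_{p,i}|\le|\cC_{p,i}|=\ell'_{p,i}$ forces $\bar{\ell}_{p,i}=\ell_{p,i}$, so $\ell_{p,i}-\bar{\ell}_{p,i}=0$ as well. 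If instead $|S''\cap\cC_{p,i}|<\ell_{p,i}$, then no sample of $\cC_{p,i}$ can be unselected, since such a sample would satisfy the \textbf{if} condition through its covariate-$p$ level; hence $|S''\cap\cC_{p,i}|=\ell'_{p,i}<\ell_{p,i}$, so $\bar{\ell}_{p,i}=\ell'_{p,i}$ and $d_{p,i}(S'')=\ell_{p,i}-\ell'_{p,i}=\ell_{p,i}-\bar{\ell}_{p,i}$. Summing over $i$ and using $\sum_i\ell_{p,i}=n$, the total deficit of $S''$ under covariate $p$ equals $n-\bar{\ell}_p$ where $\bar{\ell}_p:=\sum_i\bar{\ell}_{p,i}$; so the total deficit of $S''$ is $D:=2n-\bar{\ell}_1-\bar{\ell}_2$.

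To finish, let $E$ denote the total excess of $S''$. From $IM(S'')=D+E=2(n-f^*)$ we get $E=\bar{\ell}_1+\bar{\ell}_2-2f^*$. Since $\{\cC_{p,1},\dots,\cC_{p,k_p}\}$ partitions the control group, $\sum_i|S''\cap\cC_{p,i}|=|S''|$ for $p=1,2$, and using $|S''\cap\cC_{p,i}|=\ell_{p,i}+e_{p,i}(S'')-d_{p,i}(S'')$,
\[
2|S''|=\sum_{p=1}^2\sum_i|S''\cap\cC_{p,i}|=\sum_{p=1}^2\sum_i\bigl(\ell_{p,i}+e_{p,i}(S'')-d_{p,i}(S'')\bigr)=2n+E-D=2(\bar{\ell}_1+\bar{\ell}_2-f^*).
\]
Thus $|S''|=\bar{\ell}_1+\bar{\ell}_2-f^*$, and therefore $|S^+|=n-|S''|=n-(\bar{\ell}_1+\bar{\ell}_2-f^*)$.

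The step I expect to be the crux is the deficit identity $d_{p,i}(S'')=\ell_{p,i}-\bar{\ell}_{p,i}$: one must argue from \textbf{else}-termination that any level still carrying a deficit has been exhausted of available control samples. In contrast to the informal description preceding the lemma, it is \emph{not} true in general that $|S''\cap\cC_{p,i}|=\bar{\ell}_{p,i}$ at every level, because adding a sample to repair a deficit of one covariate's level can over-fill the intersecting level of the other covariate and create positive excess there; this is exactly why the proof tracks deficits (which are determined) alongside the imbalance identity (which absorbs the excess) rather than the raw level counts.
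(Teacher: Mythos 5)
Your proof is correct, and at its core it follows the same accounting as the paper: determine the total deficit of $S''$ at termination ($2n-\bar{\ell}_1-\bar{\ell}_2$), compare it with the deficit $2(n-f^*)$ of the initial max-flow selection, and translate the difference into $|S''|=\bar{\ell}_1+\bar{\ell}_2-f^*$, hence $|S^+|=n-(\bar{\ell}_1+\bar{\ell}_2-f^*)$. The differences are worth noting, though, and they are in your favor. The paper reaches the per-level information via an ``equivalent form'' of Algorithm~\ref{alg} and asserts that $|S''\cap \cC_{p,i}|=\bar{\ell}_{p,i}$ at every level; as you observe, that literal statement need not hold, since a sample added to repair a covariate-1 deficit sits in some covariate-2 level that (by max-flow optimality) has no deficit and therefore acquires excess, pushing its count above $\bar{\ell}_{p,i}$. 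What the paper actually uses, and what you prove directly from the \textbf{else}-termination condition, is only the deficit identity $d_{p,i}(S'')=\ell_{p,i}-\bar{\ell}_{p,i}$ (a deficient level must have been exhausted, $|S''\cap\cC_{p,i}|=\ell'_{p,i}$), which is exactly the right patch. Your finish also differs slightly in mechanics: the paper counts the number of added samples as the drop in total deficit (each \textbf{if}-step addition reduces it by exactly one, again via the max-flow optimality argument of Theorem~\ref{thm:maxf_easy}), whereas you compute $|S''|$ directly from $2|S''|=2n+E-D$ together with $IM(S'')=D+E=2(n-f^*)$; the two computations are equivalent and rest on the same fact from Theorem~\ref{thm:maxf_easy}, so this is a cosmetic rather than substantive divergence. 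Net effect: your write-up is a slightly more careful version of the paper's argument, and the crux you flagged is indeed the step the paper glosses over.
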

\begin{proof}
	
	At the initialization step of Algorithm \ref{alg} $|S'|=f^*$ and the total deficit is $2(n-f^*)$. Each time a control sample is added to $S'$ in the {\bf if} step, the total deficit is decreased exactly by 1 unit. So we can derive the value of $|S''|$ when the algorithm terminates if we know the total deficit when the algorithm terminates.  Note that the total excess may change, but we only consider here the deficit part of the imbalance.

From the discussion above, the total number of control samples of $S''$, for each level $i$ of covariate $p$, is $\bar{\ell}_{p,i}$.   We denote $\bar{\ell}_1=\sum_{i_1=1}^{k_1} \bar{\ell}_{1,i_1}$, and $\bar{\ell}_2=\sum_{i_2=1}^{k_2} \bar{\ell}_{2,i_2}$.  Since the sum $\sum_{i_1=1}^{k_1} \ell_{1,i_1}=n$ and $\sum_{i_2=1}^{k_2} \ell_{2,i_2}=n$, the sum of  deficits of set $S''$ under covariate 1 is $\sum_{i_1=1}^{k_1} \ell_{1,i_1}-\bar{\ell}_{1,i_1} = n-\bar{\ell}_1$, and the sum of  deficits under covariate 2 equals $\sum_{i_2=1}^{k_2} \ell_{2,i_2}-\bar{\ell}_{2,i_2} = n-\bar{\ell}_2$. It follows that the sum of deficits of $S''$ is $2n-\bar{\ell}_1 -\bar{\ell}_2$.

Since the initial set $S'$ that has total deficit of $2(n-f^*)$ has its deficit reduced through Algorithm \ref{alg} to $2n-\bar{\ell}_1 -\bar{\ell}_2$ in the set $S''$, the additional number of control sample in $S''$ that were added to the initial $f^*$ control samples is $2(n-f^*)-(2n-\bar{\ell}_1-\bar{\ell}_2) = \bar{\ell}_1+\bar{\ell}_2-2f^*$. Therefore, the size of $S''$ is $f^* + (\bar{\ell}_1 + \bar{\ell}_2 - 2f^*)= \bar{\ell}_1 + \bar{\ell}_2 - f^*$.   This number is less than $n$ and the size of $S^+$ then satisfies, $|S^+|=n-(\bar{\ell}_1+\bar{\ell}_2-f^*)$.

%
	
\end{proof}

\begin{corollaty}\label{cor}
If $|S^+| \geq 1$  when Algorithm \ref{alg} terminates, the total imbalance of the output solution $S'$ is $IM(S')= 4n-2\bar{\ell}_1-2\bar{\ell}_2$.
\end{corollaty}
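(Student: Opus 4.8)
The plan is to combine the two facts already established in the text immediately preceding the corollary: the imbalance of the intermediate selection $S''$, and the cardinality of the leftover set $S^+$ from the Lemma. First I would recall that, as argued right after the statement of Theorem~\ref{thm:maxf_easy} (using the same accounting as in its proof), in the case $|S^+|\geq 1$ the selection $S''$ produced before the \textbf{else} branch is reached has total imbalance $IM(S'')=2(n-f^*)$: the initial selection $S'$ of size $f^*$ carries only deficits, totalling $n-f^*$ in each covariate, and every control sample added in the \textbf{if} step reduces the total deficit by exactly one unit while increasing the total excess by exactly one unit, so the sum of deficit and excess is invariant at $2(n-f^*)$ throughout that phase.

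Next I would account for the effect of adjoining $S^+$. By the equivalent description of Algorithm~\ref{alg} given before the Lemma, at termination of the \textbf{if}-phase every level $i$ of covariate $p$ contains $\bar\ell_{p,i}=\min\{\ell_{p,i},\ell'_{p,i}\}$ control samples of $S''$; hence any further unselected control sample lies in some level-intersection $\cC_{1,i_1}\cap\cC_{2,i_2}$ with $\bar\ell_{1,i_1}=\ell_{1,i_1}$ and $\bar\ell_{2,i_2}=\ell_{2,i_2}$ (otherwise no unselected sample would remain in that level), so adding it creates exactly one unit of excess under covariate~1 and one unit under covariate~2, i.e. it increases $IM$ by $2$. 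Therefore $IM(S')=IM(S'')+2|S^+|=2(n-f^*)+2|S^+|$.

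Finally I would substitute the value of $|S^+|$ from the Lemma, namely $|S^+|=n-(\bar\ell_1+\bar\ell_2-f^*)$, to obtain
\[
IM(S')=2(n-f^*)+2\bigl(n-(\bar\ell_1+\bar\ell_2-f^*)\bigr)=2n-2f^*+2n-2\bar\ell_1-2\bar\ell_2+2f^*=4n-2\bar\ell_1-2\bar\ell_2,
\]
which is the claimed identity. There is essentially no obstacle here beyond bookkeeping; the only point requiring a line of justification is the claim that each sample of $S^+$ contributes exactly $2$ to the imbalance, which is handled by the observation above that such a sample necessarily sits in a level that is exactly matched (not capacity-limited) under both covariates.
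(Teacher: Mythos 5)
Your proposal is correct and follows essentially the same route as the paper's proof: it uses $IM(S'')=2(n-f^*)$ from the accounting in Theorem~\ref{thm:maxf_easy}, the fact that each sample of $S^+$ adds exactly two units of excess, and the value $|S^+|=n-(\bar{\ell}_1+\bar{\ell}_2-f^*)$ from the Lemma. Your extra observation that any remaining unselected sample lies in levels already filled to $\ell_{p,i}$ under both covariates is a slightly more explicit justification of the ``adds two units'' step than the paper gives, but the argument is the same.
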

\begin{proof}
The imbalance of $S''$, as in the proof of Theorem \ref{thm:maxf_easy}, is equal to $2(n-f^*)$.  Since each sample in $S^+$ adds two units to the imbalance, the total imbalance of the output solution $S'$ is $IM(S')= 2(n-f^*) + 2(n-(\bar{\ell}_1+\bar{\ell}_2-f^*))$, which is equal to $4n-2\bar{\ell}_1-2\bar{\ell}_2$, as stated.
\end{proof}

Next, we prove that this is the minimum total imbalance achievable.

\begin{theorem}\label{thm:maxf_hard}
	For any selection of size $n$, the total imbalance must be greater or equal to $4n-2\bar{\ell}_1-2\bar{\ell}_2$.
\end{theorem}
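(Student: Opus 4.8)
The plan is to show that any selection $S$ of size $n$ satisfies $IM(S) \geq 4n - 2\bar{\ell}_1 - 2\bar{\ell}_2$ by separately bounding the deficit contributions from the two covariates and exploiting the hypothesis of Corollary~\ref{cor}, namely that no valid control sample exists to add to $S''$, i.e., the total number of control samples available at each level $i$ of covariate $p$ that is not already saturated is exhausted. First I would recall that, by definition, $\bar{\ell}_{p,i} = \min\{\ell_{p,i}, \ell'_{p,i}\}$, so for each $p,i$ either $\ell'_{p,i} \le \ell_{p,i}$ (in which case $\bar{\ell}_{p,i} = \ell'_{p,i}$, and at most $\ell'_{p,i}$ control samples at this level can ever be selected, so the deficit there is at least $\ell_{p,i} - \ell'_{p,i}$) or $\ell'_{p,i} > \ell_{p,i}$ (in which case $\bar{\ell}_{p,i} = \ell_{p,i}$ and $\ell_{p,i} - \bar{\ell}_{p,i} = 0$). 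In all cases $\ell_{p,i} - \bar{\ell}_{p,i} = \max\{0, \ell_{p,i} - \ell'_{p,i}\}$ is a lower bound on $d_{p,i}(S)$ that holds for \emph{every} selection $S$, coming purely from insufficient supply.

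Summing this over all levels of covariate $p$ gives $\sum_{i} d_{p,i}(S) \geq n - \bar{\ell}_p$ for $p = 1, 2$ (using $\sum_i \ell_{p,i} = n$). This already yields $IM(S) \geq \sum_{p=1}^2 \sum_i d_{p,i}(S) \geq (n - \bar{\ell}_1) + (n - \bar{\ell}_2) = 2n - \bar{\ell}_1 - \bar{\ell}_2$, which is only half of what is wanted. The key additional ingredient is that in the $|S^+| \geq 1$ case we have a deficiency of total supply: the total number of control samples is $n' < $ (what would be needed to saturate all levels on, say, covariate $1$ while keeping size exactly $n$). Concretely, $|S| = n$ but $\sum_{i_1} \bar{\ell}_{1,i_1} = \bar{\ell}_1 < n$ is not itself the constraint; rather the binding fact is $n' - (\text{samples forced into non-saturable overflow}) $ is too small. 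I would argue that because the total pool has size $n'$ and the samples of $S$ that sit in levels already at their "useful cap" $\bar{\ell}_{p,i}$ contribute to \emph{excess} on the other covariate, one gets a second independent lower bound $\sum_p \sum_i e_{p,i}(S) \geq 2(n - (\bar{\ell}_1 + \bar{\ell}_2 - f^*)) = 2|S^+|$; adding the two bounds gives exactly $2(n-f^*) + 2|S^+| = 4n - 2\bar{\ell}_1 - 2\bar{\ell}_2$.

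The cleanest route, which I would adopt, is the symmetric mirror of the argument in Theorem~\ref{thm:maxf_easy}: given an arbitrary size-$n$ selection $S^*$, first run the excess-removal procedure to obtain a feasible max-flow vector $\x$ with flow value $F := \sum_{i_1,i_2} x_{i_1,i_2} \le f^*$ and with total deficit $\le IM(S^*)$ plus whatever excess was removed; track carefully that the sum of deficit \emph{and} excess is non-increasing under removal, then note we removed exactly $n - F$ samples, each removal increasing deficit by at most one, giving total deficit of the residual set $\ge (2n - 2F) $; meanwhile we also need the reverse accounting — samples of $S^*$ beyond the $\bar{\ell}_{p,i}$ caps are forced excesses. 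I expect the main obstacle to be making the bookkeeping of "removed samples vs.\ deficit created vs.\ original excess" precise enough to conclude $IM(S^*) \ge 2(n - F) \ge $ the target; the subtlety is that $F \le f^*$ gives a bound in the wrong direction unless one also accounts, via the supply-shortage $\bar{\ell}_1 + \bar{\ell}_2 - f^* < n$, for the irreducible excess of $2(n - \bar{\ell}_1 - \bar{\ell}_2 + f^*)$ units. I would handle this by combining the universal deficit bound $\sum_p\sum_i d_{p,i}(S^*) \ge 2n - \bar{\ell}_1 - \bar{\ell}_2$ from the supply argument with a matching universal excess bound derived from the fact that $|S^*| = n > \bar{\ell}_1 = \sum_{i_1}\min\{\ell_{1,i_1},\ell'_{1,i_1}\}$ forces at least $n - \bar{\ell}_1$ units of excess under covariate $1$, and likewise $n - \bar{\ell}_2$ under covariate $2$, so $\sum_p\sum_i e_{p,i}(S^*) \ge (n - \bar{\ell}_1) + (n - \bar{\ell}_2)$ as well; adding yields $IM(S^*) \ge 4n - 2\bar{\ell}_1 - 2\bar{\ell}_2$, completing the proof.
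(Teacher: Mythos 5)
Your closing argument is correct, and it is a genuinely different route from the paper's. You prove the bound by adding four universal inequalities: for each covariate $p$, a deficit bound $\sum_{i} d_{p,i}(S)\ \ge\ \sum_{i}\bigl(\ell_{p,i}-\bar{\ell}_{p,i}\bigr)=n-\bar{\ell}_p$, valid because no selection can contain more than $\ell'_{p,i}$ samples of level $i$; and an excess bound $\sum_{i} e_{p,i}(S)\ \ge\ n-\bar{\ell}_p$, valid because $|S\cap L'_{p,i}|\le \min\{\ell'_{p,i},\,\ell_{p,i}+e_{p,i}(S)\}\le \bar{\ell}_{p,i}+e_{p,i}(S)$ and $\sum_i |S\cap L'_{p,i}|=|S|=n$. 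Summing gives $IM(S)\ge 4n-2\bar{\ell}_1-2\bar{\ell}_2$ for every size-$n$ selection, with no reference to the maximum flow $f^*$, to Algorithm \ref{alg}, or to the case $|S^+|\ge 1$. The paper instead runs the {\sf 3-type Classification Procedure} on an arbitrary selection $S^*$, obtains $IM(S^*)=4n-2s_2-4s_3$, and separately shows $s_2+2s_3\le \bar{\ell}_1+\bar{\ell}_2$; your argument avoids that bookkeeping entirely and makes explicit that the bound is unconditional (it is simply not tight when $S^+=\emptyset$, the case covered by Theorem \ref{thm:maxf_easy}), while the paper's decomposition ties the bound more closely to the algorithmic excess/deficit dynamics. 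One caveat: the flow-based detour in your middle paragraph does not work as written --- combining the deficit bound $2n-\bar{\ell}_1-\bar{\ell}_2$ with the excess bound $2\bigl(n-(\bar{\ell}_1+\bar{\ell}_2-f^*)\bigr)$ yields only $4n-3\bar{\ell}_1-3\bar{\ell}_2+2f^*$, which is weaker than the target because $2f^*\le \bar{\ell}_1+\bar{\ell}_2$ --- but your final, self-contained argument does not rely on it and fully establishes the theorem.
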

\begin{proof}
	
	For the optimal selection $S^*$ of size $n$, let $IM(S^*)$ be the total imbalance of $S^*$. We first classify the samples in $S^*$ into three types, $S_1$, $S_2$ and $S_3$ that forms a partition of $S^*$, using the {\sf 3-type Classification Procedure}.
	
	In the procedure we use variable $dis(p,i)$ to denote the value of $dis(S,p,i)$, discrepancy for selection $S$ and level $i$ under covariate $p$.
	With this notation the excess of the corresponding level is $e(S,p,i)=\max\{0, dis(p,i)\}$, and the deficit is $d(S,p,i)=\max\{0, -dis(p,i)\}$.
%
	
	{\sf 3-type Classification Procedure}\\
	{\sf Initialize}  \\
	\taba $S\leftarrow S^*, S_1 \leftarrow \emptyset, S_2 \leftarrow \emptyset, S_3 \leftarrow \emptyset$; \\
	\taba Let $dis(p,i) \leftarrow  |S\cap \cC_{p,i}| - \ell_{p,i} $ for $p=1,2, \ i=1,...,k_p$;\\
	{\sf $S_1$ selection:} \\
	\taba \textbf{While} there exists a sample $j$ in $S$ whose covariate 1 level is $i_1$, covariate 2 level is $i_2$, such that $dis(1,i_1) >0$ and $dis(2,i_2) >0$, pick $j$;\\
	\tabb $S_1 \leftarrow S_1 \cup \{j\}, S \leftarrow S - \{j\}, dis(1,i_1) \leftarrow dis(1,i_1)  -1, \ dis(2,i_2) \leftarrow dis(2,i_2)  -1$; \\
	\taba \textbf{end while}; \\
	{\sf $S_2$ selection:} \\
	\taba \textbf{While} there exists a sample $j$ in $S$ whose covariate 1 level is $i_1$, covariate 2 level is $i_2$, such that $dis(1,i_1) >0$ or $dis(2,i_2) >0$, pick $j$;\\
	\tabb $S_2 \leftarrow S_2 \cup \{j\}, S \leftarrow S - \{j\}, dis(1,i_1) \leftarrow dis(1,i_1)  -1, \ dis(2,i_2) \leftarrow dis(2,i_2)  -1$; \\
	\taba {\bf end while}; \\
	{\sf $S_3$ selection:} \\
	\taba $S_3\leftarrow S$;\\
	{\sf end}.
	
	The output $S_1, S_2, S_3$ of the procedure is not unique, since it depends on the order in which samples are picked. However, the statements of the theorem hold for any output of the procedure.
	Note that in the procedure, whenever a sample is picked, any $dis(p,i)$ can only go down. For that reason, once $S_1$ selection ends, there will not be another sample in $S$ for which the discrepancy values of the corresponding levels under the two covariates are both positive.
	Furthermore, once the $S_1$ and $S_2$ selections are done, $dis(p,i)\leq 0$ for each $p,i$.
	That means, $ |S_3\cap \cC_{p,i}| \leq \ell_{p,i}$ for all $p,i$.
	
	Let the sizes of the three subsets be denoted by $s_1=|S_1|, s_2=|S_2|, s_3=|S_3|$.
	
	We claim that the total imbalance of samples in $S_3$ is $IM(S^*)-2s_1$. For the $S_1$ selection part of the procedure, each samples picked in $S_1$ reduces the total excess by 2. For each sample selected in the $S_2$ selection part of the procedure,  the excess is reduced by 1, and the deficit is increased by 1. So for each sample selected in the $S_2$ selection step, the total imbalance does not change. Therefore, the total imbalance of the samples in $S_3$ is $IM(S^*)-2s_1$.
	
	On the other hand, the total imbalance of $S_3$, which equals the sum of deficits of both covariates (all excesses equal zero as $ |S_3\cap \cC_{p,i}| \leq \ell_{p,i}$), is $2(n-s_3)$. That says, $$IM(S^*)-2s_1 = 2(n-s_3).$$  Hence,
	\[  IM(S^*)=2(n-s_3)+2s_1=2(n-s_3)+2(n-s_2-s_3)=4n-2s_2-4s_3. \] Here, the second equality comes from the fact that $s_1 + s_2 + s_3 = n$.
	
	Next, we show that $s_2\leq (\bar{\ell}_1-s_3)+(\bar{\ell}_2-s_3)$.  Let the samples in $S_2$ be ordered according to the order they were picked, $j_1, j_2,..., j_{s_2}$.  We now add these samples to $S_3$, in the reverse order $j_{s_2}, ..., j_1$. When each sample $j_q$ is added to $S_3$, the deficit is reduced by exactly 1 unit.  Once all the $S_2$ samples are added to $S_3$ the deficit at each level of $S_2 \cup S_3$ is zero, or alternatively, $dis(S_2 \cup S_3,p,i)= |(S_2 \cup S_3 )\cap \cC_{p,i}| - \ell_{p,i} \geq 0$ for each $p,i$.

We now consider the total deficit of $S_3$:  By the definition of $\bar{\ell}_1$ and $\bar{\ell}_2$, the positive deficit of $S_3$ under covariate 1 is at most $\bar{\ell}_1-s_3$ and that the positive deficit of $S_3$ under covariate 2 is at most $\bar{\ell}_2-s_3$. That means the size of $S_2$ is bounded by the amount of this deficit, $s_2\leq (\bar{\ell}_1-s_3)+(\bar{\ell}_2-s_3)$.
	
	Now we have,
	\begin{eqnarray*}
		s_2\leq (\bar{\ell}_1-s_3)+(\bar{\ell}_2-s_3)  &\Leftrightarrow &s_2 + 2s_3 \leq \bar{\ell}_1+\bar{\ell}_2\\
		&\Leftrightarrow &IM(S^*)=4n-2s_2-4s_3 \geq 4n-2\bar{\ell}_1-2\bar{\ell}_2.
	\end{eqnarray*}
	We conclude that the total imbalance $IM(S^*)$ is at least $4n-2\bar{\ell}_1-2\bar{\ell}_2$. That implies that the selection output of Algorithm \ref{alg}, $S'$, which has a total imbalance of $4n-2\bar{\ell}_1-2\bar{\ell}_2$, is optimal.
	
\end{proof}

The conclusion from Corollary \ref{cor} and Theorem \ref{thm:maxf_hard}, is that for $|S^+|\geq 1 $ when Algorithm \ref{alg} terminates, the output solution $S'$ is an optimal selection to the min-imbalance problem.
Together with Theorem \ref{thm:maxf_easy}, we have that Algorithm \ref{alg} outputs an optimal selection for the min-imbalance problem using the max-flow solution to the flow problem in Figure \ref{fig:maxf} as input.

\begin{theorem}
	The maximum flow formulation of the 2-covariate min-imbalance problem is solved the in $O(n'\cdot \min\{n^{\frac{2}{3}},n'^{\frac{1}{2}}\}\cdot\log^2 n)$ time.
\end{theorem}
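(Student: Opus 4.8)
The plan is to bound the size and the largest capacity of the maximum-flow instance of Figure~\ref{fig:maxf}, to invoke a maximum-flow algorithm whose running time depends favorably on those parameters, and finally to add the (negligible) cost of the recovery routine Algorithm~\ref{alg}. The first step is to preprocess the network. Any node $(1,i_1)$ with $\ell_{1,i_1}=0$ is fed by a zero-capacity arc from $s$ and carries no flow, so it and its incident arcs can be deleted without changing anything; symmetrically, delete every node $(2,i_2)$ with $\ell_{2,i_2}=0$. After this pruning every surviving node $(1,i_1)$ has $\ell_{1,i_1}\ge 1$ and every surviving node $(2,i_2)$ has $\ell_{2,i_2}\ge 1$, and since $\sum_{i_1}\ell_{1,i_1}=\sum_{i_2}\ell_{2,i_2}=n$ there are at most $n$ nodes of each type, so $|V|=O(n)$. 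The arcs incident to $s$ or $t$ number at most $2n$, and the arcs between the two layers are in bijection with the nonempty sets $\cC_{1,i_1}\cap\cC_{2,i_2}$, of which there are at most $\min\{n',k_1k_2\}\le n'$; since $n\le n'$, this gives $|A|=O(n')$. Finally, the flow entering any node $(1,i_1)$ is at most $\ell_{1,i_1}\le n$, so replacing each layer-arc capacity $u_{i_1,i_2}$ by $\min\{u_{i_1,i_2},n\}$ leaves the maximum flow value unchanged; afterwards every capacity is a nonnegative integer at most $n$, i.e. the maximum capacity is $U=O(n)$.

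The second step is to run the Goldberg--Rao maximum-flow algorithm, which on a network with $|V|$ vertices, $|A|$ arcs and integral capacities bounded by $U$ terminates in $O(\min\{|A|^{1/2},|V|^{2/3}\}\cdot|A|\cdot\log(|V|^2/|A|)\cdot\log U)$ time. Substituting $|V|=O(n)$, $|A|=O(n')$ and $U=O(n)$, and using $\log(|V|^2/|A|)=O(\log n)$ together with $\log U=O(\log n)$, the running time becomes $O(n'\cdot\min\{n^{2/3},n'^{1/2}\}\cdot\log^2 n)$. In particular, since $n\le n'$ forces $\min\{n^{2/3},n'^{1/2}\}\le n'^{1/2}$, this is at most $O(n'^{3/2}\log^2 n)$, the figure quoted in the abstract.

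The third step is to account for Algorithm~\ref{alg}, which converts the optimal flow $\x^*$ into an optimal selection by Theorems~\ref{thm:maxf_easy} and~\ref{thm:maxf_hard}. Writing down the initial selection $S'$ prescribed by $\x^*$ takes $O(n')$ time. If one maintains, for each level $i$ of each covariate $p$, the counter $|S'\cap\cC_{p,i}|$ and a list of its still-unselected control samples, then each pass of the \textbf{while} loop (there are at most $n-f^*\le n$ of them) costs $O(1)$ amortized, and the final top-up with $S^+$ costs $O(n')$; hence the recovery runs in $O(n')$ time, which is dominated by the second step. Combining the three steps gives the claimed bound.

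I expect the only genuine subtlety to lie in the preprocessing of the first step. Without deleting the zero-demand layer nodes one would only obtain $|V|=O(n')$, hence $\min\{n'^{2/3},n'^{1/2}\}=n'^{1/2}$ and merely the weaker $O(n'^{3/2}\log^2 n)$ bound; and without capping the layer arcs at $n$ one would carry a $\log n'$ factor in place of $\log n$. Both reductions are what make the refined running time in the statement attainable, while the invocation of Goldberg--Rao and the analysis of Algorithm~\ref{alg} are routine once the parameters $|V|=O(n)$, $|A|=O(n')$, $U=O(n)$ are established.
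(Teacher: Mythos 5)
Your proposal is correct and follows essentially the same route as the paper's own proof: bound $|V|=O(n)$, $|A|=O(\min\{n',k_1k_2\})$, and the effective capacity $U=O(n)$ (the paper argues this via feasible-flow bounds rather than explicitly capping and pruning, but it is the same observation), then invoke the Goldberg--Rao binary blocking flow bound and note that the recovery step of Algorithm~\ref{alg} is dominated. The minor extra preprocessing you describe is harmless but not needed beyond what the paper already asserts.
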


\begin{proof}
	We choose here the {\em binary blocking flow} algorithm of Goldberg and Rao \cite{GR98} for solving the max-flow problem shown in Figure \ref{fig:maxf} because this algorithm depends on the maximum arc capacity which is a small quantity in our formulation.
	
	The complexity of the binary blocking flow algorithm for a graph ${G}=(V,A)$ is $O(|A|\cdot \min\{|V|^{\frac{2}{3}},|A|^{\frac{1}{2}}\}\cdot\log\frac{|V|^2}{|A|}\log U )$ where $|V|$ is number of nodes, $|A|$ is number of arcs, and $U$ is maximum arc capacity.  As argued earlier for the minimum cost network flow formulation, the number of nodes in the network $|V|$ is $O\left(k_1+k_2\right)$, which is no more than $O(n)$; and the number of arcs is bounded by $\min\{n', k_1k_2 \}$.
	Although $u_{i_1,i_2}$ could be as large as $n'$, a feasible flow to our maximum flow formulation can not have more than $\ell_{1,i_1}$ units of flow on the arc from node $(1,i_1)$ to node $(2,i_2)$. Thus the maximum arc capacity $U$ is effectively $O(n)$. The ratio $\frac{|V|^2}{|A|}\leq \frac{(k_1+k_2)^2}{k_1+k_2}\leq n$. Hence, the running time of applying the binary blocking flow algorithm to our max-flow problem is $O(n'\cdot \min\{n^{\frac{2}{3}},n'^{\frac{1}{2}}\}\cdot\log^2 n)$.
	
	The complexity of Algorithm \ref{alg} is $O(n)$ as the number of iterations is bounded by $n$, and each iteration takes $O(1)$ steps.
	
	Therefore, the running time of solving the min-imbalance problem as a max-flow problem is $O(n'\cdot \min\{n^{\frac{2}{3}},n'^{\frac{1}{2}}\}\cdot\log^2 n)$.
\end{proof}



\section{Conclusions}\label{sec:conclusions}
We present here new insights to the minimum imbalance problem that involves selection of control samples so that excess and deficit of the respective treatments samples' levels are minimized.  We demonstrate that an integer programming formulation of the problem on two covariates has a totally unimodular constraint matrix.  We then present two efficient approaches to solve the problem for two covariates.  One is based on minimum cost network flow, and the other, that is in general more efficient still, is based on a maximum flow presentation of the problem.  With those flow algorithms the problem is easily solved, in theory and in practice.   We further provide here a proof of NP-hardness of the problem on three covariates in Appendix \ref{sec:NPC}, that implies that no efficient algorithm exists for the minimum imbalance problem in the presence of three or more covariates.

\begin{appendices}

\section{NP-hardness for $P\geq3$}\label{sec:NPC}

Bennett et al. \cite{BVZ20} states that the min-imbalance problem is NP-hard  when $P\geq 3$ but presented no proof. Here, we provide a proof of this complexity status.  The implication of this result is that no efficient algorithm exists for the min-imbalance problem in the presence of three or more covariates.

The reduction is from the {\em 3-Dimensional Matching} problem, one of Karp's 21 NP-complete problems \cite{KARP72}\cite{GJ79}.
\\
{\em 3-Dimensional Matching}: Given a finite set $X$ and a set of triplets $U\subset X\times X\times X$. Is there a subset $M\subseteq U$ such that $|M|=|X|$ and that no two elements of $M$ agree in any coordinate?

\begin{lemma}\label{lem:strongnp} The min-imbalance problem is NP-hard even when $P=3$.
\end{lemma}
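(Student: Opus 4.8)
The plan is to reduce from the 3-Dimensional Matching problem stated above. Given an instance consisting of a set $X$ with $|X|=q$ and triples $U\subseteq X\times X\times X$, I build a min-imbalance instance with $P=3$ covariates. The control group has one sample per triple, so $n'=|U|$; the control sample associated with a triple $u=(a,b,c)\in U$ is placed at level $a$ of covariate $1$, level $b$ of covariate $2$, and level $c$ of covariate $3$. Thus each covariate induces $k_p=q$ levels indexed by the elements of $X$, namely $L'_{p,x}=\{\,u\in U : u \text{ has } x \text{ in coordinate } p\,\}$. The treatment group is chosen so that every level of every covariate has size exactly one, i.e.\ $\ell_{p,x}=1$ for all $p\in\{1,2,3\}$ and $x\in X$; this is consistent because $\sum_{x\in X}\ell_{p,x}=q$ for each $p$, so $n=q$. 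We may assume $|U|\ge q$, since otherwise the 3DM instance is trivially a no-instance, and hence $n\le n'$. The construction is clearly polynomial.

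The core of the proof is the equivalence: the 3DM instance is a yes-instance if and only if the minimum imbalance of the constructed instance is $0$. If $M\subseteq U$ is a perfect matching with $|M|=q$, let $S$ be the set of control samples corresponding to the triples in $M$; since the $q$ triples of $M$ are pairwise non-agreeing in every coordinate, each $x\in X$ occurs in exactly one triple of $M$ in each coordinate, so $|S\cap L'_{p,x}|=1=\ell_{p,x}$ for all $p,x$, giving $|S|=q=n$ and $IM(S)=0$. Conversely, suppose a selection $S$ with $|S|=n=q$ satisfies $IM(S)=0$. Then $|S\cap L'_{p,x}|=\ell_{p,x}=1$ for every covariate $p$ and every $x\in X$; combined with the partition identity $|S|=\sum_{x\in X}|S\cap L'_{p,x}|=q$, this forces the $q$ triples underlying $S$ to induce, in each coordinate, a bijection onto $X$, so no two of them agree in any coordinate and $S$ is a perfect $3$-dimensional matching. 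This converse direction is the only delicate point; it crucially uses both that every treatment level has size exactly one and the partition identity just invoked.

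Since the decision version ``is the minimum imbalance at most $0$?'' is thus at least as hard as 3-Dimensional Matching, which is NP-complete, the min-imbalance problem with $P=3$ is NP-hard. Moreover, all numerical data of the constructed instance have size polynomial in $|U|$ --- indeed every $\ell_{p,x}$ equals $1$ --- so the reduction in fact establishes strong NP-hardness. Finally, the case $P>3$ follows immediately by appending $P-3$ further covariates, each consisting of a single level, which contribute $0$ to the imbalance of every selection and leave the optimum unchanged.
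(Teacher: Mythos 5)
Your proposal is correct and follows essentially the same route as the paper: a reduction from 3-Dimensional Matching in which each triple becomes a control sample, every treatment level has size $\ell_{p,x}=1$, and the instance has imbalance $0$ if and only if a perfect matching exists. The extra observations you add (the partition identity in the converse, the assumption $|U|\ge|X|$, strong NP-hardness, and padding with trivial covariates for $P>3$) are sound refinements of the same argument.
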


\begin{proof}
	Given an instance of 3-dimensional matching problem, we define an instance of min-imbalance problem with $P=3$, $k(p)=|X|$ for $p=1,2,3$, and $\ell_{p,k}=1$ for each $p,k$. For each triplet $u$ in $U$, we have a corresponding control sample $c$ in $\cC$ (the control group) such that the 3 covariates' levels of $c$ are defined by $u$.
	The decision problem for min-imbalance is stated as: is there a subset $S\subseteq \cC$ such that $|\cC_{p,k}\cap S |=\ell_{p,k}$ for all $p$ and $k$?
	
	We observe that if there exists a subset $M\subseteq U$ such that $|M|=|X|$ and that no two elements of $M$ agree in any coordinate, then on each coordinate, each element of $X$ must be covered by $M$ exactly once. That means, if we takes $S$ to be the subset of corresponding samples of $M$, then $|\cC_{p,k}\cap S |=1=\ell_{p,k}$ for all $p$ and $k$.
	
	And if there exists a subset $S\subseteq \cC$ such that $|\cC_{p,k}\cap S |=\ell_{p,k}=1$ for all $p$ and $k$, then the subset $M$ that corresponds to $S$ does not have two elements that agree in any coordinate, and that $|M|=|S|=|X|$.
	
	Therefore, 3-dimensional matching problem can be reduced to a 3-covariate min-imbalance problem, and hence, the min-imbalance problem is NP-hard even when $P=3$.
\end{proof}

\section{Selection of size different from $n$}\label{sec:sizeq}
The complexity of the min-imbalance problem on two covariates where the required size of the selection is $q<n'$, for $q$ different from the size of the treatment group $n$, is the same as for the case where the required size is $n$.  Both our methods described in Section \ref{sec:MCNF} and Section \ref{sec:max-flow} apply with minor modifications as described next.

\noindent
{\bf The minimum cost network flow method}
Consider the network flow formulation (\ref{MI}). With the notation defined in Section \ref{sec:ip}, $\sum_{i=1}^{k_1} \left(e_{1,i}-d_{1,i}\right)  = \sum_{i=1}^{k_1} \left(|S\cap L'_{1,i}|  - \ell_{1,i} \right) = |S| - n$. Therefore, if $|S|$ is required to be $q$, constraints (\ref{cons:sum1}) are changed to
\[ - \sum_{i_1=1}^{k_1} d_{1,i_1} + \sum_{i_1=1}^{k_1} e_{1,i_1} = q-n. \]
Similarly, constraints (\ref{cons:sum2}) are modified to
\[  \sum_{i_2=1}^{k_2} d_{2,i_2} - \sum_{i_2=1}^{k_2} e_{2,i_2} = n-q. \]
In Figure \ref{fig:mcnf}, the supply of node $1$ changes from $0$ to $q-n$, and the supply of node $2$ changes from $0$ to $n-q$. Everything else remains the same.  With these modifications it is easy to see that the optimal flow corresponds to an optimal solution to the min-imbalance with a selection of size $q$

\noindent
{\bf The maximum flow method}
The maximum flow problem used for $q \neq n$ is the same as for the case of $n$. There is however a slight modification in recovering the optimal selection from the max-flow solution.

If the size $q$ is less than the maximum flow amount $f^*$, then after the initialization step in Algorithm \ref{alg}, any $q$ samples out of the $S'$ selection, is an optimal selection. This is because each sample selected in the initialization step would reduce the total deficit by $2$ units and maintain the total excess to be zero, which is the best scenario possible.

If $q\geq f^*$, then Algorithm \ref{alg} with $n$ being replaced by $q$, gives the optimal selection. This follows from the proofs in Section \ref{sec:max-flow} with $n$ being replaced by $q$.

\section{The minimum $\kappa$-imbalance and $\kappa$-Matching-Balance problems}\label{sec:kappa}

The minimum $\kappa$-imbalance problem is to find a selection $S$ of control samples of size $\kappa n$, where $\kappa$ is an integer between $1$ and $\frac{n'}{n}$, and $n$ is the size of the treatment group, so as to minimize the $\kappa$-imbalance, defined as $\sum_{p=1}^P \sum_{i=1}^{k_p} | {|S\cap L'_{p,i}|}-\kappa \ell_{p,i} |$.   For $\kappa =1$ the problem is the min-imbalance problem and therefore it appears that the  $\kappa$-imbalance problem is more general than the min-imbalance problem.  We show here however that the minimum $\kappa$-imbalance problem is reducible to the min-imbalance problem and therefore the two problems are equivalent.


To reduce the minimum $\kappa$-imbalance problem to the min-imbalance problem we replace each treatment sample by $\kappa$ copies. This new treatment group has a size of $\kappa n$ and the size of level $i$ under covariate $p$ is $\kappa \ell_{p,i}$ for each $p$ and $i$. Hence the total imbalance defined for the new treatment group is the same as the $\kappa$-imbalance of the original treatment group.
This implies that particularly for the two covariates minimum $\kappa$-imbalance problem the methods in Section \ref{sec:MCNF} and Section \ref{sec:max-flow} provide an optimal solution in polynomial time.  If this solution, given in terms of the sizes of the level-intersections (the numbers of control samples selected in each  $L'_{1,i_1} \cap  L'_{2,i_2}$ for $i_1=1, \ldots k_1$ and $i_2 =1 ,\ldots ,k_2$), is unique, then the solution to the second stage of MB solves that problem optimally and in polynomial time.

For any number of covariates $P$ an optimal solution to the second stage of MB, for given sizes of level intersections $s_{i_1,i_2,\ldots ,i_P}$, is presented as the following minimum cost flow problem.  First there is a bipartite graph with the treatment samples each represented by a node on one side, and the control samples each represented by a node on the other size.  The cost on each arc between a treatment sample and a control sample is the ``distance" metric between the two, and the arc capacity is $1$.  Each treatment sample has a demand of $\kappa$.    For each non-zero level intersection there is a source node with supply of $s_{i_1,i_2,\ldots ,i_P}$.   This source node is connected to all control samples in the intersection $L'_{1,i_1} \cap  L'_{2,i_2}\cap \ldots \cap L'_{i_P}$ with arcs of capacity $1$ and cost of $0$.   The control sample nodes through which there is a positive flow (of $1$ unit) are the ones selected and assigned to the respective treatment sample nodes to which they have a positive flow.  This is a minimum cost network flow problem with a total demand (or supply) bounded by $n'$, and $O(n')$ arcs and $O(n')$ nodes.  Therefore the successive shortest paths algorithm, discussed in Section \ref{sec:MCNF}, solves this problem in $O(n'( n'+ n' \log n')$ steps.

\section{The minimum cost network flow, the maximum flow problems and total unimodularity}\label{sec:flow}
We formulate here the {\em minimum cost network flow problem} (MCNF). The input to the problem is a graph $G=(V,A)$ with a set of nodes $V$ and a set of arcs $A$, where each arc $(i,j)\in A$ is associated with a cost $c_{ij}$, capacity upper bound $u_{ij}$, and capacity lower bound $l_{ij}$. Each node $i\in V$ has supply $b_i$ which is interpreted as demand if negative, and can be $0$.   Let $x_{ij}$ be the amount of flow on arc $(i,j)\in A$.  The flow vector $\x$ is said to be feasible if it satisfies:\\
(1) {\bf Flow balance constraints:}  For every node $k \in V$  $Outflow(k)-Inflow (k) = b_k$\\ 
(2) {\bf Capacity constraints:}  For each arc $(i,j)\in A$, $l_{ij} \leq x_{ij} \leq u_{ij}$.

The linear programming formulation of the problem is:
\[
 \hspace{.4in}\begin{array}{ll}
 \mbox{(MCNF)~~~~}  \min & \sum_{(i,j)\in A}\ c_{ij} x_{ij}  \\
\mbox{subject to }\ & \sum_{(k,j)\in A} x_{kj}\,-\,
                         \sum_{(i,k)\in A} x_{ik} = b_i  \ \forall k \in V \\
&     l_{ij} \le x_{ij} \le u_{ij}, \  \ \forall (i,j) \in A.
\end{array}
\]

\begin{theorem}\label{thm:TU}\cite{HK56}\cite{Heller56}
	A $\{ 0,1,-1\}$-matrix where each column (or row) has at most one $1$ and at most one $-1$ is totally unimodular.
\end{theorem}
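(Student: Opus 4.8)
The plan is to prove total unimodularity directly from its definition by induction on the order of a square submatrix. Recall that a matrix is totally unimodular precisely when every square submatrix has determinant in $\{0,1,-1\}$. I will prove the statement for matrices in which every \emph{column} has at most one $+1$ and at most one $-1$; the ``row'' version then follows immediately, since a matrix is totally unimodular if and only if its transpose is, and transposition interchanges the row and column hypotheses.

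The first observation I would record is that the defining structural property is inherited by every submatrix: deleting rows and columns can only remove nonzero entries, so any square submatrix again has at most one $+1$ and at most one $-1$ in each column. This is what allows the induction to close.

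For the induction itself, let $B$ be a $k \times k$ submatrix. The base case $k=1$ is immediate, as the single entry already lies in $\{0,1,-1\}$. For the inductive step I would distinguish three exhaustive cases according to the structure of the columns of $B$. If some column of $B$ is identically zero, then $\det B = 0$. If some column of $B$ has exactly one nonzero entry, I expand the determinant along that column; this gives $\det B = \pm \det B'$ for some $(k-1)\times(k-1)$ submatrix $B'$ (the sign absorbing the cofactor sign together with the entry, which is $\pm 1$), and the inductive hypothesis yields $\det B' \in \{0,1,-1\}$, hence $\det B \in \{0,1,-1\}$. In the remaining case every column of $B$ has exactly two nonzero entries; by hypothesis each such column must then contain precisely one $+1$ and one $-1$, so every column of $B$ sums to zero. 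The all-ones row vector therefore annihilates $B$ on the left, the rows are linearly dependent, and $\det B = 0$. In every case $\det B \in \{0,1,-1\}$, which completes the induction.

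I do not expect a genuine obstacle, as this is a classical theorem. The only step meriting care is the final case, where one must note that a column with two nonzeros, \emph{under this hypothesis}, is forced to carry exactly one $+1$ and one $-1$ and hence sums to zero; this is exactly what produces the linear dependence among the rows, and it is the place where the hypothesis (as opposed to merely being a $\{0,1,-1\}$-matrix) is used essentially. I would also state the transpose reduction explicitly so that both the row and column formulations in the theorem are covered.
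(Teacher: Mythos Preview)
Your argument is correct and is in fact the standard Heller--Tompkins proof: induct on the size of a square submatrix, and in the inductive step split into the three exhaustive cases (a zero column; a column with a single nonzero, handled by cofactor expansion; all columns with two nonzeros, forcing each column to sum to zero and hence the rows to be dependent). The reduction of the row version to the column version via transposition is also fine.

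There is nothing to compare against in the paper: Theorem~\ref{thm:TU} is stated there purely as a citation to \cite{HK56,Heller56} and is not proved. Your write-up supplies exactly the classical proof those references contain, so it is entirely appropriate.
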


The addition of the capacity constraints retains the total unimodularity property.  Note that the MCNF constraint matrix is linearly dependent (with the sum of all balance constraints leading to $0=0$) and one constraint can be eliminated as it is the negative sum of the others.  

We also make use of the maximum flow problem, {\em max-flow}, which is a special case of MCNF.  The max-flow problem is defined on a directed graph $G=(V,A)$, with arc capacities $u_{ij}$ and two distinguished nodes: a source node, $s\in V$, and a sink node, $t\in V$.  The total outflow from $s$, or the total inflow into $t$, is called the value of the flow. The objective is to find the maximum value feasible flow leaving $s$ and reaching $t$ that satisfy the arc capacities.  A linear programming formulation of max-flow is:
\[
 \hspace{.4in}\begin{array}{ll}
 \mbox{(max-flow)~~~~}  \max & f  \\
\mbox{subject to }\ & \sum_{(s,j)\in A} x_{sj} = f \\
& \sum_{(k,j)\in A} x_{kj}\,-\,
                         \sum_{(i,k)\in A} x_{ik} = 0  \ \forall k \in V \setminus \{ s,t \}\\
&   0 \le x_{ij} \le u_{ij}, \  \ \forall (i,j) \in A.
\end{array}
\]

As a special case of MCNF there are specialized algorithms for max-flow which are more efficient than algorithms for MCNF.

\end{appendices}


\begin{thebibliography}{10}
\expandafter\ifx\csname url\endcsname\relax
  \def\url#1{\texttt{#1}}\fi
\expandafter\ifx\csname urlprefix\endcsname\relax\def\urlprefix{URL }\fi
\expandafter\ifx\csname href\endcsname\relax
  \def\href#1#2{#2} \def\path#1{#1}\fi

\bibitem{Rosenbaum07}
P.~R. Rosenbaum, R.~N. Ross, J.~H. Silber, Minimum distance matched sampling
  with fine balance in an observational study of treatment for ovarian cancer,
  Journal of the American Statistical Association 102~(477) (2007) 75--83.
\newblock \href {https://doi.org/10.1198/016214506000001059}
  {\path{doi:10.1198/016214506000001059}}.

\bibitem{Yang12}
D.~Yang, D.~S. Small, J.~H. Silber, P.~R. Rosenbaum, Optimal matching with
  minimal deviation from fine balance in a study of obesity and surgical
  outcomes, Biometrics 68~(2) (2012) 628--636.
\newblock \href {https://doi.org/10.1111/j.1541-0420.2011.01691.x}
  {\path{doi:10.1111/j.1541-0420.2011.01691.x}}.

\bibitem{BVZ20}
M.~Bennett, J.~P. Vielma, J.~R. Zubizarreta, Building representative matched
  samples with multi-valued treatments in large observational studies, Journal
  of Computational and Graphical Statistics 0~(0) (2020) 1--29.
\newblock \href {https://doi.org/10.1080/10618600.2020.1753532}
  {\path{doi:10.1080/10618600.2020.1753532}}.

\bibitem{Jewell58}
W.~S. Jewell, Optimal flow through networks, in: Operations Research, Vol.~6,
  1958, pp. 633--633.

\bibitem{Iri60}
M.~Iri, A new method of solving transportation-network problems, Journal of the
  Operations Research Society of Japan 3~(1) (1960) 2.

\bibitem{BG61}
R.~Busaker, P.~J. Gowen, A procedure for determining minimal-cost network flow
  patterns, Tech. rep., ORO Technical Report 15, Operational Research Office,
  John Hopkins University (1961).

\bibitem{Tomizawa71}
N.~Tomizawa, On some techniques useful for solution of transportation network
  problems, Networks 1~(2) (1971) 173--194.
\newblock \href {https://doi.org/10.1002/net.3230010206}
  {\path{doi:10.1002/net.3230010206}}.

\bibitem{EK72}
J.~Edmonds, R.~M. Karp, Theoretical improvements in algorithmic efficiency for
  network flow problems, Journal of the ACM (JACM) 19~(2) (1972) 248--264.
\newblock \href {https://doi.org/10.1145/321694.321699}
  {\path{doi:10.1145/321694.321699}}.

\bibitem{GR98}
A.~V. Goldberg, S.~Rao, Beyond the flow decomposition barrier, Journal of the
  ACM (JACM) 45~(5) (1998) 783--797.

\bibitem{KARP72}
R.~M. Karp, Reducibility among combinatorial problems, in: Complexity of
  computer computations, Springer, 1972, pp. 85--103.

\bibitem{GJ79}
M.~R. Garey, D.~S. Johnson, Computers and Intractability: A Guide to the Theory
  of NP-Completeness, W. H. Freeman, New York, NY, USA, 1978.

\bibitem{HK56}
I.~Heller, C.~Tompkins, Integral boundary points of convex polyhedra, Linear
  inequalities and related systems 38 (1956) 223--246.

\bibitem{Heller56}
I.~Heller, C.~Tompkins, An extension of a theorem of dantzig’s, Linear
  inequalities and related systems 38 (1956) 247--254.

\end{thebibliography}

\end{document}